\newcommand{\mc}[1]{\mathcal #1}
\newcommand{\new}{} 
\newcommand{\bra}[1]{\langle #1 |}
\newcommand{\ket}[1]{| #1 \rangle}
\newcommand{\ketbra}[2]{\ket{#1}\bra{#2}}
\newcommand{\proj}[1]{\ket{#1}\bra{#1}}
\newcommand{\tr}{{\rm Tr}\,}
\newcommand{\one}{{\bf 1}}
\newcommand{\alg}[1]{\mathcal #1}
\newcommand{\hil}[1]{\mathcal #1}
\newcommand{\chan}[1]{\mathcal #1}
\newcommand{\comp}[1]{\widehat{#1}}
\newcommand{\comph}[1]{\widehat{#1}^\dagger}
\newcommand{\lcomp}[2]{\widehat{#2}^{#1}}
\newcommand{\lcomph}[2]{(\widehat{#2}^{#1})^\dagger}
\newcommand{\cchan}[1]{{\comp{\chan #1}}}
\newcommand{\lcchan}[2]{{\lcomp{#1}{\chan #2}}}
\newcommand{\lcchanh}[2]{{\lcomph{#1}{\chan #2}}}
\newcommand{\chanh}[1]{\chan{#1}^\dagger}
\newcommand{\id}{{\rm id}}
\newcommand{\ops}{{\mathcal B}}
\newcommand{\V}{W}
\newcommand{\be}{\begin{equation}}
\newcommand{\ee}{\end{equation}}
\newcommand{\bs}{\begin{split}}
\newcommand{\es}{\end{split}}
\newcommand{\mytheorem}{Theorem} 
\newcommand{\mylemma}{Lemma}
\newcommand{\mycorollary}{Corollary}
\newcommand{\myexample}{Example}
\theoremstyle{plain}
\newtheorem{proposition}{Proposition}
\newtheorem{theorem}[proposition]{\mytheorem}
\newtheorem{lemma}[proposition]{\mylemma}
\newtheorem{corollary}[proposition]{\mycorollary}
\newtheorem{example}[proposition]{\myexample}
\theoremstyle{definition}
\newtheorem{definition}{Definition}
\begin{document}

\title{Approximate recovery with locality and symmetry constraints}
\author{C\'edric B\'eny}
\address{Department of Applied Mathematics, Hanyang University (ERICA), 55 Hanyangdaehak-ro, Ansan, Gyeonggi-do, 426-791, Korea.}
\author{Zolt\'an Zimbor\'as}
\address{Department of Theoretical Physics, Wigner Research Centre for Physics, Hungarian Academy of Sciences, P.O. Box 49, H-1525 Budapest, Hungary}
\author{Fernando Pastawski}
\address{Dahlem Center for Complex Quantum Systems, Freie Universit\"at Berlin, 14195 Berlin, Germany  and\\
 Gravity, Quantum Fields and Information, Max  Planck  Institute  for  Gravitational  Physics, Am M\"uhlenberg 1, D-14476 Potsdam-Golm, Germany}

\begin{abstract}
Numerous quantum many-body systems are characterized by either fundamental or emergent  constraints---such as gauge symmetries or parity superselection for fermions---which effectively limit the accessible observables and realizable operations. 
Moreover, these constraints combine non-trivially with the potential requirement that operations be performed locally. 
The combination of symmetry and locality constraints influence our ability to perform quantum error correction in  two counterposing ways.
On the one hand, they constrain the effect of noise, limiting its possible action over the quantum system.
On the other hand, these constraints also limit our ability to perform quantum error correction, or generally to reverse the effect of a noisy quantum channel. 
We analyze the conditions that local channels should satisfy in the constrained setting, and characterize the resulting optimal decoding fidelity. In order to achieve this result, we introduce a concept of local complementary channel, and prove a new local information-disturbance tradeoff.
\end{abstract}

\maketitle

\section{Introduction}

A usual assumptions in quantum information theory literature, is that all self-adjoint operators on a Hilbert space can be in principle measured. 
However, in many systems of interest, such as fermions or gauge fields, accessible observables are limited to certain subalgebras. 
Fermions provide the simplest example for this; only observables commuting with fermion parity are considered physical. 
More generally, gauge theories, which describe the dynamics of elementary particles with remarkable accuracy and elegance, postulate that physical observables must commute with the gauge constraints.
Finally, there are many-body physical systems for which these symmetries are not fundamental but emergent \cite{Wen2004, Zeng2015}, the resulting effect is equivalent as long as the energy of the environments and/or observers is reasonably limited.

Despite the ubiquitous nature of gauge symmetries, superselection rules and locality constraints, a general quantum information framework for studying the interplay of these constraints is in its infancy.
The pivotal role that quantum information and particularly quantum error correction (QEC) is playing in recent developments of both condensed matter and high energy physics urgently demands the development of a solid framework.
Significant progress has been made reinterpreting  entanglement in the presence of superselection rules and gauge constraints~\cite{schuch2004, Banuls2007,donnelly2012,casini2014,vanacoleyen2016,Witten2018}. 
Here, we consider the information-disturbance tradeoff and its application to  QEC, where it can be used to characterize which communication channels (representing noise) can be reversed on a given code.
In particular, we will mention two areas where this will immediately prove useful.

On the  condensed-matter side, a plethora of work has recently been dedicated to the classification of symmetry protected topological phases (SPTs) \cite{Chen2011, Chen2013, Senthil2015} and symmetry enriched topological phase (SETs).
These generalize the notion of topological order to a settings where a symmetry or gauge constraints are imposed.
A SPT phase is a phase which would be trivial if the imposed symmetry were allowed to be broken yet become ``disconnected'' from the trivial (product) phase if the symmetry is imposed.
In contrast, SET are phases disconnected from the trivial phase even if the symmetry constraint is lifted.
As the stability for topological order essentially requires QEC conditions  \cite{Bravyi2010, Michalakis2013} , it is only natural that this connection extend to the symmetry protected setting.
\new{Indeed, the seminal example for an SPT phase, to which we apply our results, is Kitaev's Majorana chain \cite{Kitaev2001}.
This is a gapped phase with topological degeneracy protected by fermionic parity conservation and geometric locality. 
In parallel to the theoretic exploration of new candidate symmetry protected codes for quantum information storage \cite{Bravyi:2010de, Roberts2018}, Majorana chains are being pursued experimentally as a candidate qubit \cite{Mourik2012}.}

Within high energy physics, recent progress in holography  provides the second natural arena for our results.
In particular, the realization that the bulk/boundary mapping in holography presents properties of a QEC \cite{Almheiri2014} has lead to vigorous debate with respect to the role of symmetries and gauge constraints.
Whereas qualitative features have been reproduced in the context of traditional QEC theory ~\cite{pastawski2015,pastawski2016, Hayden2016, Cotler2017}, it has been argued that gauge constraints play an essential role \cite{Mintun2015,Pastawski2017b}.
In fact, the boundary theory in holography is a gauge quantum field theory and thus the validity QEC assumptions must be re-examined in the presence of corresponding constraints.
 
In a general approach to constrained systems, the allowed observables form a $*$-algebra $\alg A$, namely the set is closed under multiplication and the operation of taking the adjoint. Moreover, the observables local to a certain region of space form a $*$-subalgebra of $\alg A$ which is not necessarily associated with a tensor factor of the full Hilbert-space. 

Technically, if they are infinite-dimensional, these algebras require some additional mathematical structure, such as that of a C$^*$-algebra or von Neumann algebra, but here, for simplicity, we consider only algebras of finite-dimensional matrices, closed under the conjugate-transpose ($\dagger$-algebras) where those concepts are all equivalent. This is appropriate for systems of fermions on finite lattices, but will require some generalization to be applicable to lattice gauge theory to account for the fact that the Hilbert space of the gauge field on each edge is not finite-dimensional for Lie groups.  

A natural starting point would be the {\it operator-algebra quantum error correction} (OAQEC)~\cite{Beny2007} (a synthesis of the theory of noiseless subsystems and subsystem codes~\cite{knill2000,Kribs2005}) because it provides sufficient conditions for a quantum channel to be reversible on a given code when one only cares about a given $\dagger$-algebra of observable.
We instead consider a broad generalization of this approach to the approximate setting introduced in Ref.~\cite{beny2010} (\mytheorem~\ref{thmduality}), based on techniques borrowed from Ref.~\cite{kretschmann2008} (information-disturbance tradeoff). 
This approach is a particular formalization of the general fact that quantum information can be recovered after the action of a channel if and only if it is not available in the environment (as characterized by the complementary channel).

In section \ref{rev_const}, we show that these results can be adapted to the case where the recovery map is required to be ``physical'', in that it does not reveal information outside of the allowed observables (\mycorollary~\ref{main1}). 
Indeed, constraints on physical observables also affect channels as these should not enable the indirect measurement of unphysical observables. 
In addition, if we require that our channels be acting locally to some region of space, then they must leave unchanged all the observables acting outside that region.

In section \ref{Sec:RevLocal}, we further extend them to a situation where the recovery map is required to be local, in that it fixes observables associated with a complementary region of space (\mytheorem~\ref{main2a} and \mycorollary~\ref{main2b}). 
This requires a concept of {\em local complementary map}, defined in Section~\ref{lcc}.

\section{Preliminaries}
In this section we review material required to present and illustrate our results.
In particular section \ref{sec:PreFermion} reviews the algebra for fermions which allows providing the simplest examples beyond tensor product Hilbert spaces with a genuine physical motivation.
In section \ref{sec:PrePhysicalChannel}, we propose a notion of physicality of a quantum channel as derived form the physicality from the algebras of observables.
Finally, we review the main result of Ref. \cite{beny2010,beny2011} and exemplify how they comprise traditional QEC conditions.
The current work can be seen as a natural generalization of these result to a setting where symmetry constraints and locality are imposed on the channels involved.

\subsection{Fermions\label{sec:PreFermion}}

As an example system with a constraint, let us consider a system of spinless fermions on a lattice with sites indexed by $\Omega = \{1,\dots,N\}$. 
One associates to each site an annihilation operator $a_i$. These operators generate a minimal $\dagger$-algebra such that $a_i^\dagger a_j + a_j a_i^\dagger = \delta_{ij} \one$ and $a_i a_j + a_j a_i = 0$ hold for all $i$, $j$. 
Before considering locality, this algebra is isomorphic to that of all operators acting on a Hilbert space $\hil H$ of dimension $2^N$ (which can be regarded as the Fock space corresponding to $N$ modes with a basis $(a_N^{n_N})^\dagger \cdots (a_1^{n_1})^\dagger \ket 0$, where $n_1,\dots n_N \in \{0,1\}$ count the ``number of fermions'' at each sites, and $\ket 0$ is the Fock vacuum).

For any region of space  corresponding to the subset of vertices $\omega \subseteq \Omega$, we want to interpret the $\dagger$-subalgebra generated by the operators $a_i$ for $i \in \omega$ as characterizing the observables local to $\omega$. These algebras, however, do not commute for disjoint subsets. In order to make sure that observables in disjoint regions are jointly measurable, and hence commute, we declare that only observables which are even order polynomials in the annihilation operators are {\em physical}. 

This is equivalent to saying that the physical operators are those that commute with the parity observable $C$ which has eigenvalues $1$ and $-1$ respectively for states with an even and odd number of fermions, and hence is referred to as the \emph{parity superselection rule} \cite{Banuls2007, zimboras2014}.

Hence, for every region of space $\omega \subseteq \Omega$, we assign a physical subalgebra $\alg A_\omega$ of operators which are functions of the operators $a_i$, $i \in \omega$, and commute with $C$. Specifically, these algebras have the form $\alg A_\omega \simeq \alg M_{1} \oplus \alg M_{-1}$, where $\alg M_{\pm 1}$ are full matrix algebras of dimensions $k$ by $k$ with $k = 2^{|\omega|-1}$. They correspond to all operators acting on the Hilbert space sectors with an even, respectively odd, number of fermions on region $\omega$. 
This is an instance of a {\em local quantum theory} as defined in algebraic quantum field theory (but simpler since space is discrete).

\new{
The fermionic operators $a_j$ may also be formally expressed in terms of their Majorana counterparts $w_{2j} = {a_j + a^\dagger_j}$ and $w_{2j-1} = {ia_j - i a^\dagger_j}$.
These Majorana operators are manifestly Hermitian ($w_k = w^\dagger_j$) satisfy the fermionic anti-commutation relations $\{w_k, w_l\}_+ =2 \delta_{lm}$.
When convenient, we will chose to index the system in terms of twice as many Majorana indices. 
In terms of the Majorana indices, we may define the parity observable 
\begin{align}\label{eq:parity}
	C_\omega = i^{|\omega|-1}  \prod_{j \in  \omega}  w_j
\end{align}
which is only defined up to a global sign, given that exchanging the order of two Majorana operators introduces an additional minus sign.
Note that $C_\omega$ will itself be a physical observable effective at defining superselection sectors precisely when $|\omega|$ is finite and even but not in general.
Finally, we may define the projectors onto the parity superselection sectors as $P_\pm = \frac{\one \pm C}{2}$.}

\subsection{Physical channels \label{sec:PrePhysicalChannel}}

Let us consider a channel, also known as completely-positive and trace-preserving (CPTP) maps, $\chan N: \ops(\hil H) \rightarrow \ops(\hil K)$, where $\hil H$ and $\hil K$ are finite-dimensional Hilbert spaces, and $\ops(\hil H)$ denotes the set of all operators on $\hil H$. 
Also let $\alg A$ and $\alg B$ be $\dagger$-algebras of operators acting on $\hil H$ and $\hil K$ respectively, which represent the physical observables. Below, we always assume that these algebras contain the identity on their respective Hilbert spaces (in general, an algebra's identity element could be a projector on the Hilbert space). 

For $\chan N$ to be physical, it should be such that the recipient cannot gain information about unphysical observables. This is most easily expressed in the Heisenberg picture as $\chanh N(\alg B) \subseteq \alg A$, where $\chanh N$ is defined by $\tr(\chan N(\rho)B) = \tr(\rho \chanh N(B))$ for all operators $B$ on $\hil K$ and all density matrices $\rho$ on $\hil H$. 

Let us introduce the Hilbert-Schmidt orthogonal projector $\chan P$ on $\alg A$ (which is a {\em conditional expectation} from $\ops(\hil H)$ to $\alg A$). This is a channel satisfying $\chanh P = \chan P = \chan P^2$ whose range is $\alg A$. Similarly, let $\chan Q$ be the projector on $\alg B$.
\new{ For instance, the conserve parity $C$ of fermions, this projector is given by $\chan P(\rho) = \frac{1}{2}[\rho + C \rho C]$ corresponding to its interpretation as dephasing w.r.t. $C$.
It may equivalently be written as $\chan P(\rho) = P_+ \rho P_+ + P_- \rho P_-$ corresponding to an interpretation as "blind measurement" where $P_+$ and $P_-$ are projectors onto the even and odd parity sectors respectively.}

\begin{definition}\label{def:PhysicalChannel}
Let $\chan P$ and $\chan Q$ be the projector channels onto the physical algebra of observables on the source and target Hilbert space of $\chan N$ respectively. 
We say that the channel $\chan N$ is physical with respect to this restriction if  
\begin{equation}\label{eq:PhysicalChannel}
\chan Q \chan N \chan P = \chan Q \chan N.
\end{equation}
\end{definition}
This definition is central to our result as it allows naturally incorporating locality and symmetry conditions into the setting of channels.

\begin{proposition}\label{Fermionic_physical_channels} 
\new{Let $C$ be the charge observables of a fermionic system.
Then $\chan N$ is a physical channel on this space if it admits a Kraus representation 
\begin{equation}
	\chan N(\rho) = \sum_j E_j \rho E^\dagger_j,
\end{equation}
where all $E_j$ have definite parity (i.e. $E_j C = \pm C E_j$).
Conversely, if $\chan N$ is physical, then the channel $\chan Q \chan N = \chan Q \chan N \chan P$, which has equivalent action on the physical observables, can be written in terms of definite parity Kraus operators.}
\end{proposition}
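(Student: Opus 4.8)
The plan is to prove both directions by relating the physicality condition $\chan Q\chan N\chan P=\chan Q\chan N$ to the structure of Kraus operators, using the explicit form of $\chan P$ for the parity superselection rule, namely $\chan P(\rho)=\tfrac12(\rho+C\rho C)$, and similarly $\chan Q(\sigma)=\tfrac12(\sigma+C'\sigma C')$ on the target (I will write $C'$ for the target charge, which may equal $C$ if the spaces coincide).

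\emph{First direction (sufficiency).} Suppose $\chan N(\rho)=\sum_j E_j\rho E_j^\dagger$ with each $E_j$ of definite parity, so $E_jC=\epsilon_j C'E_j$ with $\epsilon_j=\pm1$. I would compute $\chan N\chan P$ directly: $\chan N(\chan P(\rho))=\tfrac12\sum_j E_j\rho E_j^\dagger+\tfrac12\sum_j E_jC\rho CE_j^\dagger=\tfrac12\sum_j E_j\rho E_j^\dagger+\tfrac12\sum_j C'E_j\rho E_j^\dagger C'$ (using $\epsilon_j^2=1$), which equals $\tfrac12\bigl(\chan N(\rho)+C'\chan N(\rho)C'\bigr)=\chan Q(\chan N(\rho))$. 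Hence $\chan Q\chan N\chan P=\chan Q\chan N$, so $\chan N$ is physical. This step is a short computation.

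\emph{Second direction (necessity).} Suppose $\chan N$ is physical. The claim is that $\chan Q\chan N$ (equivalently $\chan Q\chan N\chan P$) admits a definite-parity Kraus representation. Start from an arbitrary Kraus representation $\{F_k\}$ of $\chan N$ and decompose each $F_k$ into its even and odd parts with respect to the pair $(C,C')$: write $F_k=F_k^++F_k^-$ where $F_k^\pm=\tfrac12(F_k\pm C'F_kC)$, so $F_k^\pm$ has definite parity $\pm1$. The map with Kraus operators $\{F_k^+,F_k^-\}$ is exactly $\chan P'_{\text{dephase}}\circ$ something — more precisely, a direct computation shows $\sum_k(F_k^+\rho(F_k^+)^\dagger+F_k^-\rho(F_k^-)^\dagger)=\tfrac12(\chan N(\rho)+C'\chan N(C\rho C)C')=\chan Q\chan N\chan P(\rho)$, where the last equality uses $\chan Q(\sigma)=\tfrac12(\sigma+C'\sigma C')$ and the physicality condition to replace $\chan N\chan P$ patterns. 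So the operators $\{F_k^\pm\}$ are a genuine (definite-parity) Kraus family for the channel $\chan Q\chan N\chan P=\chan Q\chan N$. I should also check this collection is trace-preserving, which follows because $\chan Q\chan N\chan P$ is a composition of CPTP maps, or directly from $\sum_k\bigl((F_k^+)^\dagger F_k^++(F_k^-)^\dagger F_k^-\bigr)=\tfrac12\sum_k(F_k^\dagger F_k+CF_k^\dagger F_kC)=\tfrac12(\one+C\one C)=\one$.

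\emph{Main obstacle.} The only delicate point is bookkeeping the cross terms: expanding $\sum_k F_k\rho F_k^\dagger$ with $F_k=F_k^++F_k^-$ produces cross terms $F_k^+\rho(F_k^-)^\dagger+F_k^-\rho(F_k^+)^\dagger$, which do \emph{not} vanish termwise and are precisely what $\chan Q$ (the dephasing on the target) kills; similarly the input dephasing $\chan P$ is needed to symmetrize the $C\rho C$ term. So the identity one must verify is not that the parity-projected Kraus map equals $\chan N$ itself, but that it equals $\chan Q\chan N\chan P$, and establishing this requires invoking Definition~\ref{def:PhysicalChannel} to identify $\chan Q\chan N$ with $\chan Q\chan N\chan P$. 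Making sure the algebra of these cross terms and the role of $C$ versus $C'$ is handled cleanly is the crux; everything else is routine.
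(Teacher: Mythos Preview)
Your sufficiency argument is correct and in fact slightly cleaner than the paper's: you show directly that $\chan N\chan P=\chan Q\chan N$ for definite-parity Kraus operators, from which $\chan Q\chan N\chan P=\chan Q\chan N$ follows.

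The converse, however, contains a genuine error. Your computation
\[
\sum_k\bigl(F_k^+\rho(F_k^+)^\dagger+F_k^-\rho(F_k^-)^\dagger\bigr)=\tfrac12\bigl(\chan N(\rho)+C'\chan N(C\rho C)C'\bigr)
\]
is correct, but the claimed identification of the right-hand side with $\chan Q\chan N\chan P(\rho)$ is false. Take $\chan N=\id$ (with $C'=C$), which is physical. Then $F=\one$ gives $F^+=\one$, $F^-=0$, so your map is the identity, while $\chan Q\chan N\chan P=\chan P\neq\id$. More generally, physicality only gives $\chan Q\bigl[\chan N(\rho)-\chan N(C\rho C)\bigr]=0$, i.e.\ the difference is \emph{odd} under $C'$; it does not give $\chan N(C\rho C)=\chan N(\rho)$, which is what you would need. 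Your construction produces a channel $\chan M$ with $\chan Q\chan M=\chan Q\chan N$, but $\chan M\neq\chan Q\chan N$ as channels.

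The fix is minimal: apply your parity decomposition not to the Kraus operators $\{F_k\}$ of $\chan N$, but to those of $\chan Q\chan N$ (namely $\{\tfrac{1}{\sqrt2}F_k,\tfrac{1}{\sqrt2}C'F_k\}$) or of $\chan Q\chan N\chan P$. Then the same calculation yields $\tfrac12\bigl(\chan Q\chan N(\rho)+C'\chan Q\chan N(C\rho C)C'\bigr)=\chan Q\chan N\chan P(\rho)$, now correctly, because $\chan Q$ has already been applied. This is exactly the route the paper takes: it starts from the Kraus family $\{E_j,CE_j,E_jC,CE_jC\}$ of $\chan Q\chan N\chan P$ and recombines into the definite-parity operators $E_j\pm CE_jC$ and $CE_j\pm E_jC$.
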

\begin{proof}
	\new{
To see that all channels of this form are indeed physical, it is sufficient to expand the Kraus operators in equation \eqref{eq:PhysicalChannel}.
Indeed, expanding $\chan Q \chan N \chan P$, we obtain Kraus operators $\{E_j, C E_j, E_j C, C E_j C\}$ with normalization $1/2$ whereas expanding $\chan Q \chan N$ we obtain Kraus operators $\{E_j, C E_j \}$ with normalization $1/\sqrt{2}$.
We may then use $C^2=1$ and the (anti-)commutation relation to obtain $C E_j C \rho C E^\dagger_j C = E_j \rho E^\dagger_j$ and $E_j C \rho C E^\dagger_j = C E_j \rho E^\dagger_j C$.
Conversely, given the channel $\chan Q \chan N \chan P$ and its Kraus operators $\{E_j, C E_j, E_j C, C E_j C\}$, we may rewrite it in terms of Kraus operators commuting with $C$, $\{E_j + CE_jC, C E_j + E_j C\}$ and operators anti-commuting with $C$, $\{E_j - C E_j C, C E_j - E_j C\}$.}
\end{proof}
\new{
Note that some channels as $\chan N(\rho) = \frac{1}{4}[ (1+iw)\rho(1-iw) + (1-iw) \rho (1+iw)]$ are not manifestly physical, but admit an equivalent representation $\chan N(\rho) = \frac{1}{2}[ \rho + w \rho w]$ which is, and may be thought of as the dephasing channel w.r.t. a Majorana mode $w$.}

\subsection{Reversal and information-disturbance trade-off}

Given a noise channel $\chan N$, an important question is whether the effect of this channel can be reversed, i.e., whether there is a recovery channel $\chan R$ such that $\chan R \chan N(\rho) = \rho$ for a certain set of states $\rho$, typically all those supported on a certain subspace. Alternatively, one may ask whether the channel can be reversed on a single state $\rho$, but with certain restriction as to the locality of $\chan R$. In the literature, variations of these questions have been referred to as {\em channel sufficiency}~\cite{petz1988}, {\em quantum error correction}~\cite{Knill1997}, {\em channel recoverability}~\cite{Junge2016} or {\em channel reversal}~\cite{buscemi2016}. 

In the presence of constraints, what we require is the weaker condition $\chan R \chan N(\rho) = \chan P(\rho)$, since we do not worry about the expectation value of unphysical observables.

In addition, if equality is not exactly achieved, we may want to quantify the error using some measure of similarity between channels.
Here we focus on the following result from Refs.~\cite{beny2010,beny2011}, where channels are compared using a fidelity $F$:
\begin{theorem}
\label{thmduality}
For any two channels 
$\chan N$ and $\chan M$,
\begin{equation}
\label{duality}
\max_{\chan R} F(\chan R \chan N, \chan M) 
= \max_{\chan S} F(\cchan N, \chan S \cchan M),
\end{equation}
where the maxima are taken over all CPTP maps, and $\cchan N$ and $\cchan M$ are any channels {\em complementary} to $\chan N$ and $\chan M$ respectively. 
This holds taking $F$ to be either (a) the {\it entanglement fidelity} $F_\rho$ or (b) the {\it worst-case entanglement fidelity}  $F_\V$.
\end{theorem}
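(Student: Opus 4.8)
The plan is to establish both cases of \eqref{duality} along a common route: fix Stinespring dilations, use Uhlmann's theorem to replace each maximization over channels by a maximization over isometries on the dilating (environment) spaces, and then observe that the resulting optimizations on the two sides run over the \emph{same} set of operators. Concretely, fix Stinespring isometries $V:\hil H\to\hil K\otimes\hil E$ of $\chan N$ and $W:\hil H\to\hil K'\otimes\hil E'$ of $\chan M$; tracing out $\hil E$ (resp.\ $\hil E'$) returns $\chan N$ (resp.\ $\chan M$), while tracing out $\hil K$ (resp.\ $\hil K'$) yields admissible complements $\cchan N$, $\cchan M$, and since any other complement differs from these by an isometry on its environment that can be absorbed into the optimization over $\chan S$, it suffices to prove \eqref{duality} for these choices. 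A recovery $\chan R:\ops(\hil K)\to\ops(\hil K')$ has a Stinespring isometry $U:\hil K\to\hil K'\otimes\hil F$, so $(U\otimes\id_\hil E)V$ dilates $\chan R\chan N$. For case~(a), purify $\rho$ to $\ket\psi$ on $\hil H\otimes\hil R$; the outputs of $\chan R\chan N$ and $\chan M$ on $\ket\psi$ are purified by $((U\otimes\id_\hil E)V\otimes\id_\hil R)\ket\psi$ and $(W\otimes\id_\hil R)\ket\psi$, so Uhlmann's theorem gives
\begin{equation*}
F_\rho(\chan R\chan N,\chan M)=\max_{\hat S}\bigl|\tr\bigl(\rho\,V^\dagger(U^\dagger\otimes\id_\hil E)(\id_{\hil K'}\otimes\hat S)W\bigr)\bigr|,
\end{equation*}
where $\hat S:\hil E'\to\hil F\otimes\hil E$ ranges over contractions (and isometries suffice, after enlarging $\hil F$). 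Writing $\mathcal Z$ for the set of all $Z:=(U^\dagger\otimes\id_\hil E)(\id_{\hil K'}\otimes\hat S)$ obtained this way, the left-hand side of \eqref{duality} equals $\max_{Z\in\mathcal Z}|\tr(\rho\,V^\dagger ZW)|$. The very same computation applied to the right-hand side — with $\cchan N,\cchan M$ dilated by the \emph{same} $V,W$ and a recovery $\chan S:\ops(\hil E')\to\ops(\hil E)$ with Stinespring isometry $U_S:\hil E'\to\hil E\otimes\hil G$ — gives $\max_{\chan S}F_\rho(\cchan N,\chan S\cchan M)=\max_{Z\in\mathcal Z'}|\tr(\rho\,V^\dagger ZW)|$, where $\mathcal Z'$ is the set of all $(\id_\hil E\otimes\hat S')(U_S\otimes\id_{\hil K'})$.

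The heart of the matter is the algebraic identity $\mathcal Z=\mathcal Z'$. Given $Z\in\mathcal Z$ built from isometries $U$ and $\hat S$, I will exhibit $Z$ in $\mathcal Z'$ by ``reparsing'' the combined isometry: take $\hil G:=\hil F$, obtain $U_S$ from $\hat S$ by interchanging its $\hil F$ and $\hil E$ output legs (still an isometry, precisely by the isometry relation for $\hat S$), and obtain $\hat S'$ from $U$ by interchanging and conjugating legs (still an isometry, by the isometry relation for $U$); a one-line index computation then shows $(\id_\hil E\otimes\hat S')(U_S\otimes\id_{\hil K'})=Z$. The mirror construction gives $\mathcal Z'\subseteq\mathcal Z$, so the two maxima coincide and \eqref{duality} follows. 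This reparsing is exactly the operational statement that post-processing the output of a channel is ``the same'' as processing its environment — the information--disturbance duality itself — and I expect carrying it out carefully (in particular, justifying the reduction to honest isometries $U,\hat S$ and keeping the tensor-leg bookkeeping straight) to be the main obstacle.

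For case~(b), the only extra ingredient is that the Uhlmann isometry can be pulled outside the worst-case minimization. Writing $X_{\hat S}:=V_A^\dagger(\id\otimes\hat S)V_B$ for the relevant dilated channels, one has $F_W(\chan A,\chan B)=\min_\rho\max_{\hat S}|\tr(\rho\,X_{\hat S})|$; since a global phase can be absorbed into $\hat S$, we may replace $|\tr(\rho\,X_{\hat S})|$ by $\re\tr(\rho\,X_{\hat S})$, which is bilinear in $(\rho,\hat S)$ over convex compact sets, so von Neumann's minimax theorem gives $F_W(\chan A,\chan B)=\max_{\hat S}\min_\rho\re\tr(\rho\,X_{\hat S})$. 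Feeding this into the computation of the previous paragraphs, both sides of \eqref{duality} reduce to $\max_{Z}\min_\rho\re\tr(\rho\,V^\dagger ZW)$ over $\mathcal Z$ (respectively $\mathcal Z'$), and the identity $\mathcal Z=\mathcal Z'$ closes the argument exactly as in case~(a); this final minimax step is routine once the bilinear form has been isolated.
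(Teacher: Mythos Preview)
The paper does not prove \mytheorem~\ref{thmduality}; it is quoted as a prior result from Refs.~\cite{beny2010,beny2011} and used as a black box (the paper later remarks that ``the proof in Ref.~\cite{beny2010} (or Ref.~\cite{beny2011} with more details) works unchanged'' for a mild variant). Your sketch follows precisely the route of those references: fix Stinespring dilations, apply Uhlmann's theorem to rewrite each channel fidelity as a maximization over environment contractions, observe that the two resulting variational problems range over the same set of operators $Z:\hil K'\otimes\hil E'\to\hil K\otimes\hil E$, and for $F_\V$ invoke a convex minimax argument to pull the Uhlmann maximization outside the $\min_\rho$. So at the level of strategy there is nothing to compare: you have reconstructed the original argument.

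One point needs correcting. Your claim that $\hat S'$, obtained from the Stinespring isometry $U$ ``by interchanging and conjugating legs'', is ``still an isometry, by the isometry relation for $U$'' is false as stated. In indices your recipe gives $\hat S'_{k,(f,k')}=\overline{U}_{(k',f),k}$, i.e.\ $\hat S'=U^\dagger$ composed with a leg swap; since $U:\hil K\to\hil K'\otimes\hil F$ is an isometry, $U^\dagger$ is a \emph{co}-isometry, so $\hat S'$ is a contraction but not an isometry unless $\dim\hil K=\dim\hil K'\cdot\dim\hil F$. This is not fatal: the Uhlmann optimization on the right-hand side is over contractions (equivalently, over isometries $T:\hil K\to\hil G\otimes\hil K'$ with $\hat S'=T^\dagger$), and a co-isometry qualifies. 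Symmetrically, in the reverse inclusion $\mathcal Z'\subseteq\mathcal Z$ you should build the Stinespring isometry $U$ from the Uhlmann isometry $T$ (via a leg swap, which preserves isometry), not from the contraction $\hat S'$. With that bookkeeping fixed, the reparsing $\mathcal Z=\mathcal Z'$ goes through and your argument is complete.
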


Specifically, the two fidelity measures considered are defined as follows.
\begin{enumerate}
\item[(a)] The
{\it entanglement fidelity} compares the effect of two channels on a single state, while accounting for the possible loss of entanglement with a reference system:
\begin{equation}
F_\rho(\chan N, \chan M) := f((\chan N  \otimes \id)(\psi),(\chan M \otimes \id)(\psi)),
\end{equation}
where $\psi \equiv \proj \psi$ denotes any purification of $\rho$, and $f$ is the fidelity $f(\rho,\sigma) = \tr(\sqrt{\sqrt \rho \sigma \sqrt \rho})$. This quantity can also be used to bound the average fidelity with respect to an ensemble averaging to $\rho$~\cite{schumacher1996}.
\item[(b)]
{\it Worst-case entanglement fidelity}:
Alternatively, channels can be compared on a {\em code}, that is, a subspace $\hil H_0$ of $\hil H$ defined by a canonical isometry $\V: \hil H_0 \rightarrow \hil H$, and can be characterized using the worst-case entanglement fidelity
\begin{equation}\label{eq:worse_fidelity}
F_\V(\chan N, \chan M) := \min_\rho f((\chan N \chan \V \otimes \id)(\psi_\rho),(\chan M \chan \V \otimes \id)(\psi_\rho)),
\end{equation}
where $\chan \V(\rho) = \V \rho \V^\dagger$, and $\psi_\rho$ is any purification of $\rho$. 
\end{enumerate}

Both channel fidelities can be used to construct distances satisfying the triangle inequality, such as the Bures distance.
Note that if $\rho \in \chan B(\hil H_0)$, then by definition $F_W \leq F_\rho$ as suggested by the names.
Below, all we need is the fact that both fidelities are monotonic under the left action of any channel, i.e., 
\begin{equation}
F(\chan R \chan N, \chan R \chan M) \ge F(\chan N,\chan M)
\end{equation} 
for any channels $\chan R$, $\chan N$, $\chan M$.

A {\em complementary channel} $\cchan N$ of $\chan N$ can be built as follows. The Stinespring dilation theorem states that there is an isometry $V: \hil H \rightarrow \hil K \otimes \hil L$ such that $\chan N(\rho) = \tr_{\hil L} V \rho V^\dagger$, where $\tr_{\hil L}$ is the partial trace over $\hil L$. Let $\ket i$ denote elements of a basis of $\hil L$, then we obtain the Kraus operators $E_i = (\one \otimes \bra i) V$. 
Reciprocally, $V = \sum_i E_i \otimes \ket i$. 
Any such dilation gives us a complementary channel $\cchan N(\rho) = \tr_{\hil K} V \rho V^\dagger$. We also call a channel $\cchan N'$ complementary to $\chan N$ if there exists channels $\chan R$ and $\chan S$ such that $\cchan N' = \chan R \cchan N$ and $\cchan N = \chan S \cchan N'$, where $\cchan N$ has the above form. (This is the equivalence relation defined in Ref.~\cite{beny2011}).

In order to illustrate the use of theorem \ref{thmduality}  for channel reversal, we first present the setting of perfect recovery in traditional subspace QEC. 
\begin{example}[Subspace QEC]
Consider the case when both fidelities in Eq.~\eqref{duality}  are maximal using $F=F_\V$, and for $\chan M = \id$ (i.e. we wish $\chan R$ to recover {\em all} information initially available in the code defined by $\V$). 
In this case, we can use $\cchan M = \tr$. Hence the channel $\chan S$ to be optimized on the right hand side of Eq.~\eqref{duality} is just a state $\sigma$, since it is applied to the one-dimensional density matrix $1$: $\chan S(1) = \sigma$.
Eq.~\eqref{duality} means that $\chan N$ is exactly correctable on the code defined by $\V$ if and only if there exists a state $\chan \sigma$ such that $\cchan N(\V \rho \V^\dagger) = \sigma \, \tr(\rho)$.
\end{example}
In terms of an explicit expression for $\chan N$,
\begin{align}\label{eq:NoiseChannel}
\chan N(\rho) &= \sum_i E_i \rho E_i^\dagger  \text{ and }\\
\cchan N(\rho) &= \sum_{ij} \tr(\rho E_j^\dagger E_i) \ketbra i j,\label{eq:ComplementaryNoise}
\end{align}
this means that for all $i$, $j$, $\V^\dagger E_j^\dagger E_i \V = \bra i \sigma \ket j \V \V^\dagger$, which are the Knill-Laflamme conditions for quantum error correction  \cite{Knill1997}.

\begin{example}[OAQEC]
When $\chan M = \chan P_{\alg A}$ is the projector on a $\dagger$-algebra $\alg A$: the condition from maximum fidelity yields that $\alg A$ is correctable on the code defined by the isometric encoding $\chan \V(\rho) = \V \rho \V^\dagger$ if and only if 
\begin{equation}
\cchan N \chan \V = \chan S \chan P_{\alg A'} \chan \V
\end{equation}
for some channel $\chan S$, where we used the fact that a channel complementary to $\chan P_{\alg A}$ is $\cchan P_{\alg A} = \chan P_{\alg A'}$: the projector on the {\em commutant} of $\alg A$. 
\end{example}

To recover the original formulation of {\em operator algebra QEC} (OAQEC)~\cite{Beny2007,Beny2007b}, we use $\V = \one$, but replace $\chan N$ by $\chan N \chan \V'$ where now $\chan \V'$ is the encoding isometry. We obtain that there is a channel $\chan R$ such that $\chan R \chan N \chan \V' = \chan P_{\alg A}$ if and only if there is a channel $\chan S$ such that $\comp{\chan N \chan \V'} = \cchan N \chan \V' = \chan S \chan P_{\alg A'}$. It is easy to see that we can then use $\chan S = \cchan N \chan \V'$. The resulting condition is that the range of $(\cchan N \chan \V')^\dagger$ be inside $\alg A'$. 
 Expressed in terms of Kraus operators, this is the result of Ref.~\cite{Beny2007}. This also characterizes subsystem codes~\cite{Kribs2005} when the algebra is a factor.

\section{reversal on constrained systems \label{rev_const}}

In the present section, we address the question of channel reversal for constrained systems, and provide some instructive examples.

\subsection{Reversal and constrained systems}

In the presence of constraints, the problem with using the duality relation given by Eq.~\eqref{duality}  is that the optimization on the left hand side is over channels $\chan R$ which may not be physical. 

Recall that we defined $\chan P$ and $\chan Q$ as the channels projecting respectively on the source and target's physical algebra.
By substituting $\chan Q \chan N$ for $\chan N$ in Eq.~\eqref{duality}, we obtain 
\begin{equation}
\label{basic}
\max_{\chan R} F(\chan R \chan Q \chan N, \chan M) = \max_{\chan S} F(\comp {\chan Q \chan N}, \chan S \comp {\chan M}).
\end{equation}
The recovery channel $\chan R' = \chan R \chan Q$ is properly physical since $\chan P \chan R' \chan Q = \chan P \chan R \chan Q = \chan P \chan R'$.

But does this correspond to the optimization over all physical recovery maps? 
Let us specialize this to the case where $\chan M = \chan P \chan M$. For instance, this is the case if $\chan M$ is the projector on any subalgebra of $\alg A$. 

Suppose $\chan R$ is any physical recovery channel. Then because of the contractivity of the Bures distance and the fact that $\chan P^2 = \chan P$,
\begin{equation}
F(\chan P \chan R \chan Q \chan N, \chan P \chan M) = F(\chan P \chan R \chan N, \chan P \chan M) \ge F(\chan R \chan N, \chan P \chan M).
\end{equation}
Therefore, if $\chan R$ is any physical optimal recovery channel then so is $\chan R' = \chan P \chan R \chan Q$.  
We conclude that:
\begin{corollary}
\label{main1}
For any physical channel $\chan N$ from a system with physical algebra projector $\chan P$ to one with projector $\chan Q$, and any channel $\chan M$.
\begin{equation}
\label{optimalrecovery}
\max_{\text{$\chan R$ physical}} F(\chan R \chan N, \chan P \chan M) = \max_{\chan S} F(\comp {\chan Q \chan N}, \chan S \comp {\chan P \chan M}),
\end{equation}
where the optimization on the left hand side is over channels $\chan R$ which are physical, i.e., such that $\chan P \chan R \chan Q = \chan P\chan R$ and the right hand side optimization over channels $\chan S$ is unconstrained.
\end{corollary}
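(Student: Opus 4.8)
The plan is to reduce the constrained maximisation on the left-hand side of \eqref{optimalrecovery} to an unconstrained one of the type appearing in \eqref{basic}. Concretely, I would first establish
\[
\max_{\text{$\chan R$ physical}} F(\chan R \chan N, \chan P \chan M) = \max_{\chan R} F(\chan R \chan Q \chan N, \chan P \chan M),
\]
the right-hand maximum being over all CPTP maps, and then read off the right-hand side of \eqref{optimalrecovery} by substituting $\chan P\chan M$ for $\chan M$ in \eqref{basic} (which is just Theorem~\ref{thmduality} applied to the channel $\chan Q\chan N$). Since $\chan Q\chan N$ is a composition of channels it is itself a channel, so $\comp{\chan Q\chan N}$ is a legitimate complementary channel and the substitution is valid.

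For ``$\ge$'' in the displayed identity the key observation is that $\chan R\chan Q$ is physical for \emph{every} channel $\chan R$: since $\chan Q^2=\chan Q$ we have $\chan P(\chan R\chan Q)\chan Q = \chan P\chan R\chan Q = \chan P(\chan R\chan Q)$. Hence each value $F(\chan R\chan Q\chan N,\chan P\chan M) = F\big((\chan R\chan Q)\chan N,\chan P\chan M\big)$ is attained by a physical recovery map, and maximising over $\chan R$ gives the inequality. For ``$\le$'' I would reuse the chain already spelled out in the text preceding the statement: for physical $\chan R$, left-monotonicity of $F$ together with $\chan P^2=\chan P$ gives $F(\chan R\chan N,\chan P\chan M)\le F(\chan P\chan R\chan N,\chan P\chan M)$; physicality of $\chan R$, i.e.\ $\chan P\chan R=\chan P\chan R\chan Q$, rewrites this as $F\big((\chan P\chan R)\chan Q\chan N,\chan P\chan M\big)$, which is at most the right-hand maximum because $\chan P\chan R$ is again a channel. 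Maximising over physical $\chan R$ closes the sandwich, and combining with \eqref{basic} proves the corollary.

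I do not expect a genuine obstacle: the argument is essentially bookkeeping with the idempotency of $\chan P$ and $\chan Q$ and the left-monotonicity of the fidelities. The points deserving care are that $\chan R\chan Q$, $\chan P\chan R$ and $\chan P\chan R\chan Q$ are all honest CPTP maps (immediate, since $\chan P$ and $\chan Q$ are themselves channels); that the monotonicity bound $F(\chan P\chan X,\chan P\chan Y)\ge F(\chan X,\chan Y)$ and contractivity of the Bures distance are available — this is exactly the property recorded just after Theorem~\ref{thmduality}, valid for both $F_\rho$ and $F_\V$; and keeping track of which optimisation domain each inequality feeds into. Incidentally, the hypothesis that $\chan N$ be physical is not used in deriving the equality itself; it merely fixes the intended reading of $\comp{\chan Q\chan N}$ on the right-hand side.
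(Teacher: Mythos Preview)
Your proposal is correct and follows essentially the same route as the paper: both arguments combine the observation that $\chan R\chan Q$ is automatically physical with the monotonicity/physicality chain $F(\chan R\chan N,\chan P\chan M)\le F(\chan P\chan R\chan N,\chan P\chan M)=F(\chan P\chan R\chan Q\chan N,\chan P\chan M)$, and then invoke \eqref{basic} with $\chan M$ replaced by $\chan P\chan M$. Your remark that the physicality of $\chan N$ is not actually used in the derivation is also accurate.
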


If $\chan P$ denotes the projector on $\dagger$-algebra $\alg A$, then $\cchan P$ can be taken as the projector on the {\em commutant} $\alg A'$~\cite{Beny2009,beny2011}.  
Moreover, $\comp{\chan P \chan M}(\rho) = (\cchan P \otimes \id_{E})(V \rho V^\dagger)$ where $V$ is the isometry from the Stinespring dilation of $\chan M$, and $E$ the environment, or ancilla for this distillation. The same can be done to obtain $\comp{\chan Q \chan N}$.

\mycorollary~\ref{main1} holds whether we replace $F$ by $F_\V$ or $F_\rho$. 
For instance, with $F = F_\V$ and $\chan M = \id$,
the left-hand side of Eq.~\eqref{optimalrecovery} is 
the worst-case fidelity of recovery for states within the code-space defined by $\V$. 

In contrast, the entanglement fidelity $F_\rho$ provides a lower bound \cite{schumacher1996} on how well recovery fares on average with respect to an ensemble represented by $\rho$.
This bound was invoked in Ref. \cite{Pastawski2017b}, to evaluate the QEC properties of a thermal CFT ensemble as, in this setting, it would be much better behaved than $F_W$ going to the setting of an infinite dimensional Hilbert space.

\subsection{Example: exact reversal for commuting constraints}

\label{ex_rev_sup}

Let us consider a case where the physical algebra takes the form
$\alg A = \ops(\mathbb C^{n_1}) \oplus \dots \oplus \ops(\mathbb C^{n_d})$, with each superselection sector characterized by a projector $P_i$ of rank $n_i$. There is a corresponding ``charge'' observable $C = \sum_i c_i P_i$, $c_i \neq c_j$. For instance, for a system of fermions, $C$ would be the parity observable.
Alternatively, this algebra may arise from requiring that observables commute with self-adjoint operators $L_i$ which all commute with each other, such as in an Abelian gauge theory.

The projector $\chan P$ on $\alg A$ represents a ``blind measurement'' of $C$:
\begin{equation}\label{eq:BlindMeasure}
\chan P(\rho) = \sum_i P_i \rho P_i.
\end{equation}
A Stinespring dilation of $\chan P$ is given by the isometry $\sum_i P_i \otimes \ket i$ where the extra system records the measurement outcome. It follows that
\begin{equation}
\cchan P(\rho) = \sum_i \tr(P_i \rho)  \proj i.
\end{equation}
This is the quantum-to-classical channel characterizing the measurement of $C$. 
Hence, the map $\chan S$ in Eq.~\eqref{optimalrecovery} prepares a quantum state depending on the classical outcome $i$ of the global charge measurement.

Let us take $\chan Q = \chan P$, $\chan N$ as in \eqref{eq:NoiseChannel} and $\chan M = \chan P$ in Eq.~\eqref{optimalrecovery}, meaning that we wish to recover {\it all} the {\it physical} information.
Using the dilation isometry $V = \sum_i E_i \otimes \ket i$, we have
\begin{equation}
\begin{split}
\comp{\chan P \chan N}(\rho) &= (\cchan P \otimes \id)(V \rho V^\dagger)\\
&= \sum_{jnm} \tr(P_j E_m \rho E_n^\dagger P_j) \, {\ketbra m n} \otimes \proj j,
\end{split}
\end{equation}

In this example, the map $\chan S$ is of the form
\begin{equation}
\chan S(\proj i) = \sum_{j} \sigma_{j|i} \otimes \proj j
\end{equation}
where 
\begin{equation}
\label{sigmas_constraints}
\sigma_{j|i} \ge 0 \quad \text{and} \quad \sum_j \tr \sigma_{j|i} = 1.
\end{equation}
 
Let us consider the implication of Eq.~\eqref{optimalrecovery} for exact reversal of $\chan N$ on a subspace defined by the isometry $\V$. For exact reversal, $F_\V$ equals to $1$ exactly when $F_\rho$ equals $1$, provided $\rho$ has full rank on the code space defined by $\V$. 

Here, both $\chan R$ and $\chan N$ act on $\hil H$ and the dilation $\psi$ of $\rho$ is defined on an extended Hilbert space $\hil H \otimes \hil J$. The channel is exactly correctable in this case when
\begin{equation}
(\comp{\chan P \chan N} \otimes \id)(\psi) = (\chan S\cchan P \otimes \id)(\psi).
\end{equation}
Or in other words, for all $j,n,m$,
\begin{equation}
\label{gkl0}
\V^\dagger E_n^\dagger P_j E_m \V = \sum_i \bra n \sigma_{j|i} \ket m \V^\dagger P_i \V.
\end{equation}
The existence of the states $\sigma_{j|i}$ satisfying the above equation and the constraints \ref{sigmas_constraints} is necessary and sufficient for the correctability of the channel $\chan N$ on the code $\V$. 

This leads to:
\begin{corollary}\label{cor:correctable_with_superselection}
A necessary condition for the channel $\chan N$ to be correctable on the code with isometry $\V$, for a system with superselection charge $C = \sum_i c_i P_i$, is that there exist complex numbers $c_{ijnm}$ such that
\begin{equation}\label{gkl}
\V^\dagger E_n^\dagger P_j E_m \V = \sum_i c_{ijnm} \V^\dagger P_i \V.
\end{equation}
\end{corollary}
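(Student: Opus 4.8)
The plan is to obtain Eq.~\eqref{gkl} as the linear-algebraic shadow of the exact-reversibility condition \eqref{gkl0} already derived above. First I would make precise what ``$\chan N$ correctable on the code $\V$'' means in the constrained setting: there is a \emph{physical} recovery channel $\chan R$ with $\chan R \chan N \chan \V = \chan P \chan \V$ (we demand recovery of all physical information, hence the target is $\chan P$), equivalently that the left-hand side of Eq.~\eqref{optimalrecovery} equals $1$ for the choices $F = F_\V$, $\chan Q = \chan M = \chan P$. By the remark following Eq.~\eqref{eq:worse_fidelity}, and since $\rho$ has full rank on the code space, this is the same as $F_\rho = 1$, so nothing is lost in passing between the two fidelities here.

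Next I would invoke \mycorollary~\ref{main1}: saturation of the left-hand fidelity forces the right-hand side to equal $1$ as well, so there is a channel $\chan S$ with $\comp{\chan P \chan N}\chan\V = \chan S\, \cchan P \chan\V$ acting on a full-rank purification $\psi$. Plugging in the explicit form of $\cchan P$ (the classical measurement channel of the charge $C$) and of $\comp{\chan P \chan N}$ computed just above the corollary, the channel $\chan S$ is pinned down to the block form $\chan S(\proj i) = \sum_j \sigma_{j|i}\otimes\proj j$ with $\sigma_{j|i}\ge 0$ and $\sum_j \tr\sigma_{j|i} = 1$, and the operator identity becomes exactly Eq.~\eqref{gkl0}. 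This step is essentially the discussion preceding the statement, so little genuinely new work is required; it is, however, the one place needing care, namely verifying that saturating the duality really does force $\chan S$ into this controlled-state-preparation form and yields the operator equation \eqref{gkl0} rather than a strictly weaker consequence.

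Finally I would set $c_{ijnm} := \bra n \sigma_{j|i}\ket m \in \mathbb C$ and substitute into Eq.~\eqref{gkl0}, which reads off Eq.~\eqref{gkl} immediately. Discarding the positivity and normalization constraints \eqref{sigmas_constraints} on the $\sigma_{j|i}$ turns the characterization into the strictly weaker requirement that each $\V^\dagger E_n^\dagger P_j E_m \V$ lie in the span of $\{\V^\dagger P_i \V\}_i$, which is therefore only necessary --- matching the claimed statement. I do not anticipate any serious obstacle beyond the bookkeeping in the middle step; the content of the corollary is precisely that the ``$\sigma$-matrix'' in the generalized Knill--Laflamme relation is constrained to be block-diagonal across the superselection sectors $\{P_i\}$, reflecting that the sector label carries no physical information one needs to restore.
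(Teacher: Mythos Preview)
Your proposal is correct and follows essentially the same route as the paper: the corollary is obtained directly from the exact-reversibility identity Eq.~\eqref{gkl0} by setting $c_{ijnm}=\bra n \sigma_{j|i}\ket m$ and dropping the positivity/normalization constraints~\eqref{sigmas_constraints}, turning a necessary-and-sufficient characterization into a merely necessary one. The paper presents this as an immediate consequence (``This leads to:'') rather than as a separate proof, so your more explicit bookkeeping around \mycorollary~\ref{main1} and the form of $\chan S$ only spells out what the paper leaves implicit.
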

This necessary condition also becomes sufficient if $[C,\V \V^\dagger] = 0$, i.e., if the charge observable $C$ commutes with the code projector $\V\V^\dagger$. 
Indeed, this is equivalent to $P_i \V\V^\dagger P_j = 0$ for $i \neq j$. 
In other words, Eq. \eqref{gkl} is also a sufficient condition if all states in the code respect the superselection criterion, which is natural in this context. 

In the unconstrained case, corresponding to $C = \one$, it is sufficient to just ask for proportionality between the left and right hand side. 
A similar simplification can be achieved if the encoding map outputs states of a given charge, i.e., $P_i W = W$ for some specific $i$.
In this case, eq. \eqref{gkl} simplifies to 
\begin{equation}\label{simple_gkl}
\V^\dagger E_n^\dagger P_j E_m \V = c_{jnm} \V^\dagger \V,
\end{equation}
which is much more reminiscent of the original Knill-Laflamme condition of Ref.~\cite{Knill1997}.

\subsection{Application: Majorana ring (physical)}\label{sec:MajoranaRing}

Encoding and processing quantum information in Majorana particles \cite{Kitaev2001} is one of the most actively pursued approaches to protect quantum information from decoherence. This is mainly due to the possibility of engineering condensed matter Hamiltonians \cite{Alicea2010, Mourik2012} whose degenerate ground state space (code-space) contains two Majorana modes (fractionalized fermions).
An idealized version of the Majorana wire is characterized by a Hamiltonian of the form  $H_{\text{maj}} = \sum_{j=1}^{N-1}  (w_{2j+1} + i w_{2j})(w_{2j+1} - i w_{2j})$. 
In addition to being composed of geometrically local terms, this Hamiltonian is quadratic in the fermionic operators and thus commutes with the imposed parity symmetry. 
The ground state is such that the fermionic modes with annihilation operators $w_{2j} + i w_{2j+1}$ are unoccupied.
Note however, that the Majorana mode $w_1$ and $w_{2N}$ do not appear in this Hamiltonian, there is necessarily a degeneracy which repeats itself throughout the full spectrum and can be traced to the degree of freedom representing presence or absence of a delocalized fermionic mode $ b = w_1 + i w_{2N}$. 
  
The caveat with this simplified picture of a Majorana chain is that it supports a single logical fermion $b$ (i.e. a two-dimensional logical Hilbert space with coherent superposition  forbidden by parity). 
A more interesting setting includes multiple ($k$) logical fermions $b_1, b_2, \ldots, b_k$ which are encoded in way into ($2k$) unpaired Majorana modes of the physical system indexed by $\omega = \{\omega_1, \omega_2, \ldots, \omega_{2k}\}$, which have alternating parity. 
These are the modes which would not appear in the description of the underlying caricature Hamiltonian. 
For instance, this can be realized in a situation where a $1D$ ring is decomposed into $2k$ intervals $I_1, \ldots I_{2k}$ ($I_j = [\omega_j+1, \omega_{j+1}-1]$) with neighboring intervals representing domains having competing pairing nature as in Fig. \ref{fig:MajoranaRing} (i.e. regions $iw_{2j} w_{2j+1}$ or $iw_{2j} w_{2j-1}$). 
\begin{figure}[htb]
	\centering
	\includegraphics[width=0.4\textwidth]{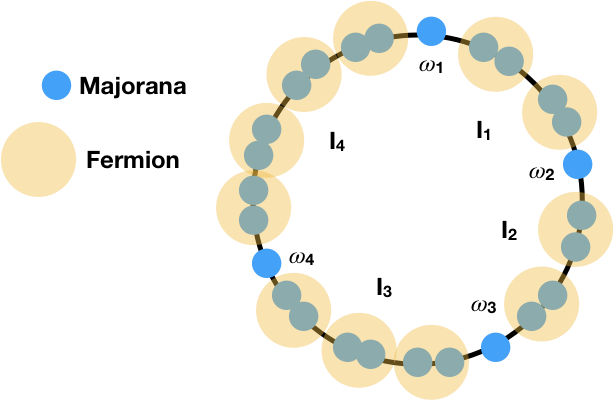}
	\caption{\new{The figure illustrates a how 13 fermions which have been fractionalized into 26 Majorana modes geometrically arranged on a circle.
	A Hamiltonian pairs  the Majorana modes into effective local fermions leaving four unpaired Majorana modes $\omega_j$.
These modes are expected to be well protected against naturally occurring noise forms with the protection initially increasing with the length of the intervals $I_j$.}}
	\label{fig:MajoranaRing}
\end{figure}

We can use the formalism developed so far to characterize all the correctable errors, when both the channel and noise are physical. 
Let us point out that such maps are allowed to change parity, by bringing fermions to or from the outside of our system, a phenomenon known as {\em quasiparticle poisoning} \cite{PhysRevB.85.174533, Temme2014}. 
In section~\ref{Sec:RevLocal}, we consider the situation where the channels are constrained to be local, and hence are not able to change the parity.

In the language of section \ref{rev_const}, we take $\chan P$  and $\chan Q$ as in Eq. \eqref{eq:BlindMeasure} to be the projector onto the physical $N$-fermion algebra. 
The encoding channel $\chan \V$ simply initializes the Majorana modes indexed by  $\bigcup_{j=1}^{2k} I_j \equiv \Omega \setminus \omega$ to the corresponding nearest neighbor pairing. 
This guarantees that $\chan \V = \chan \V^2$ and that orthogonal parity projectors on the full system are mapped by $\chan \V^\dagger$ onto orthogonal parity projectors on $\omega$ (i.e. $\V^\dagger C_\Omega \V = \pm C_\omega)$ up to a sign. 

\new{ To build intuition, let us consider some examples of correctable and non-correctable noise maps.
A simple unitary noise operator which does not satisfy eq. \ref{gkl} is $\chan N(\rho) = E_1 \rho E^\dagger_1$ with $E_1 = (1 + w_{\omega_1})/\sqrt{2}$.
However, this map is not a physical fermion map as required by proposition \ref{Fermionic_physical_channels}: more explicitly, ${\chan N}^\dagger(iw_{\omega_1}w_{\omega_2}) = -i w_{\omega_2}$ which is an odd monomial and thus, not a physical observable (we may also compare channels $\chan P^\dagger \chan N^\dagger \chan P^\dagger$ and $\chan N^\dagger \chan P^\dagger$ acting on this mode).
We may now multiply \eqref{gkl} by $P_-W$ on the left and by $W^\dagger P_+$ on the right.
The RHS will yield zero by assumption, whereas the LHS will contain a non zero contribution corresponding to a coherent superposition between two parity sectors.}

\new{In contrast, if the noise map consists of a single unitary Kraus operator with definite parity action, then such a noise map will be both physical and correctable by its inverse which is also physical.
Notably, this includes the case where the noise map performs a parity flip, changing the global physical parity.}

\new{A simple noise map $\chan N$ which is physical yet not correctable is given by the two Kraus operators $w_{\omega_1}$ and $w_{\omega_2}$, with normalization $1/\sqrt{2}$.
If there are more than two encoded Majorana modes, this map annihilates the non-trivial logical observable $M = iw_{\omega_2}w_{\omega_3}$ (i.e. this operator commutes with $\V\V^\dagger$).
The observable $M$ is physical yet irrecoverable since $\chan N^\dagger(M) = 0$ and $M$ preserves the logical code-space.
Using corollary \ref{cor:correctable_with_superselection}, we may take the commutator of $\V^\dagger M \V$ with both sides of eq. \eqref{gkl}.
\begin{equation}
[ \V^\dagger M \V, \V^\dagger E_n^\dagger P_j E_m \V] = 
 \sum_i c_{ijnm} [\V^\dagger M \V, \V^\dagger P_i \V].
\end{equation}
Using that $M$ commutes with $\V\V^\dagger$ and $P_j$, we find that the RHS vanishes for all $n, m$, whereas the LHS remains non-trivial for $n \neq m$.}

\new{This does not provide an explicit computation of the non-unit optimal recovery fidelity.
One possibility for this is to evaluate the RHS expression of eq. \ref{optimalrecovery} taking $\chan M = \chan P$ .
In order to obtain an upper bound on worse case fidelity $F_W$ or entanglement fidelity $F_\rho$, it is sufficient to consider in eq. \ref{eq:worse_fidelity} an initial state $\rho$ presenting maximal correlation of $M$ with an environment (one bit).
We see that $\chan S \widehat{\chan P \chan M}$ will completely break these correlation, which may not be reconstructed by $\chan S$.}

\new{The map $\chan N$ is composed of low weight local Kraus operators suggesting that the Majorana chain is not a reasonable quantum code.
However, physical considerations pertaining to locality (see sec.\ref{sec:fermion_strong_loc}) suppress the possibility of such parity flipping jump operators.
This amounts to the physical assumption that the system does not exchange fermions with its environment. 
Including terms which change the system parity is a physical process referred to as {\em quasi-particle poisoning} and has a ruinous effect on the memory properties of the Majorana wire.}

\new{As shown in corollary \ref{main1}, we may assume that the global recovery begins by performing a projective measurement of the parity.
Furthermore, from corollary \ref{Fermionic_physical_channels} we may decompose the physical channel $\chan N$ into the sum of two terms, a parity preserving channel $\chan N_+$ and a parity flipping channel $\chan N_-$ by grouping the Kraus operators w.r.t. their eigenvalue under $C$ conjugation.
By further assuming that $P_i\V =\V$ for some $i$ (as done in \cite{Pastawski2017b}) the problem of finding a recovery map for $\chan N$ simplifies to the problem of finding independent recovery maps $\chan R_+ = \chan R P_i$ and $\chan R_- = \chan R P_{\bar i}$ for $\chan N_+$ and $\chan N_-$.
Indeed, in section \ref{sec:MajoranaRingStrongLocality} we will focus on correcting for $\chan N_+$ which admits a much more favorable treatment if geometric locality is required on the noise Kraus operators. }

\section{Local reversibility \label{Sec:RevLocal}}

In Refs.~\cite{flammia2016}, the unconstrained dual optimization relation Eq.~\eqref{duality} for $\chan M = \id$ was adapted to characterize local reversibility, i.e., with the constraint that the recovery channel be local to a subsystem. 
This was later applied \cite{Kim2016}, to study local recoverability in the setting of an isometry $W$ defined by a MERA circuit \cite{Vidal2008}.
In this section, we generalize the notion of local recovery to a setting where the physical local algebra can be arbitrary $\dagger$-algebras extending the information-disturbance approach of Refs. \cite{beny2010, beny2011}.

\subsection{Local channels}
\label{local_chan}

Let us consider a {\em local net of algebras}, i.e., a map assigning each region of space $\omega$ to a $\dagger$-subalgebra $\alg A_{\omega}$ such that $\alg A_{\omega} \subseteq \alg A_{\omega'}$ whenever $\omega \subseteq \omega'$, and $\alg A_{\omega}$ and $\alg A_{\omega'}$ commute whenever $\omega \cap \omega' = \emptyset$. Also we assume that $\one \in \alg A_{\omega}$ for all $\omega$.

A channel $\chan N$ local to $\omega$ should be such that $\chanh N(\alg A_\omega) \subseteq \alg A_\omega$, so that an observer with access to $\omega$ cannot learn about observables which are not local to $\omega$ due to the action of the channel. In addition, $\chanh N$ should {\em fix} the observables $\alg A_{\omega^c}$ local to the complement set $\omega^c$, so that an observer having access to $\omega^c$ cannot learn that anything happened.

One may also require that additional observables be fixed, up to the full commutant of $\alg A_\omega$ (which is not in general equal to $\alg A_{\omega^c}$, as can be seen for fermions). We want to be agnostic towards such choices. Hence we define locality generally as follows:

\begin{definition}
\label{local_channel}
We say that channel $\chan N$ (from $\hil H$ to itself) is local to an algebra $\alg A$, with effective complement $\alg B \subseteq \alg A'$ if $\chanh N(\alg A) \subseteq \alg A$ and $\chanh N$ fixes $\alg B$, i.e., $\chanh N(B) = B$ for all $B \in \alg B$.
If $\alg B = \alg A'$ (commutant of $\alg A$), we say that $\chan N$ is {\em strongly local}.
\end{definition}

If $\alg A = \alg A_\omega$, the effective complement $\alg B$ is to be distinguished from the algebra $\alg A_{\omega^c}$ on the complementary region. We may set $\alg B = \alg A_{\omega^c}$ in the above definition, in which case we would say that the channel is {\em weakly local}.

In order to better understand the implications of this definition, we need the following known fact:
\begin{lemma}
\label{fixed-point}
If a channel $\chan N(\rho) = \sum_i E_i \rho E_i^\dagger$ {\em fixes} a  $\dagger$-algebra $\alg B$ (i.e., $\chanh N(B)=B$ for all $B \in \alg B$) then
$[E_i,B] = 0$ for all $B \in \alg B$ and all $i$.
\end{lemma}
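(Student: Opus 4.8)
The plan is to exploit the fact that a channel which fixes an element $B$ of the algebra acts as an isometry, in the Hilbert--Schmidt sense, on that element, and then to invoke the equality condition in the Kadison--Schwarz (or Cauchy--Schwarz) inequality for unital completely positive maps. Concretely, since $\chan N$ is trace-preserving, its adjoint $\chanh N$ is unital, and $\chanh N(B) = B$ for all $B \in \alg B$. Because $\alg B$ is a $\dagger$-algebra, $B^\dagger B \in \alg B$ as well, so $\chanh N(B^\dagger B) = B^\dagger B = \chanh N(B)^\dagger \chanh N(B)$. The Kadison--Schwarz inequality for the unital CP map $\chanh N$ gives $\chanh N(B^\dagger B) \ge \chanh N(B)^\dagger \chanh N(B)$, and here equality holds. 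The multiplicative domain / equality-case analysis then forces each Kraus operator to commute with $B$.

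**Key steps, in order.** First, I would write $\chanh N(X) = \sum_i E_i^\dagger X E_i$ and observe $\sum_i E_i^\dagger E_i = \one$ (trace-preservation of $\chan N$). Second, fix $B \in \alg B$; using $B^\dagger B \in \alg B$, write out
\begin{equation}
0 = \chanh N(B^\dagger B) - \chanh N(B)^\dagger \chanh N(B) = \sum_i E_i^\dagger B^\dagger B E_i - \Big(\sum_i E_i^\dagger B^\dagger E_i\Big)\Big(\sum_j E_j^\dagger B E_j\Big).
\end{equation}
Third, I would recognize the right-hand side as (minus) a Gram-type / variance expression that is manifestly positive semidefinite: introduce the block operator $T = \sum_i E_i \otimes \ket i$ (the Stinespring isometry already used in the paper to build complementary channels), so that the above difference equals $T^\dagger (B^\dagger B \otimes \one) T - T^\dagger (B^\dagger \otimes \Pi) T\, T^\dagger (B \otimes \Pi) T$ with $\Pi = \sum_i \proj i$, and more usefully it equals the ``conditional variance'' $\sum_i (BE_i - \sum_j E_j c_{ji})^\dagger(BE_i - \sum_j E_j c_{ji})$ for the appropriate coefficients $c_{ji} = E_j^\dagger B E_i$ summed correctly — i.e. it is a sum of terms $X_i^\dagger X_i \ge 0$. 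Fourth, from the vanishing of this positive sum each term vanishes, which pins down $B E_i$ as a specific linear combination of the $E_j$; feeding this back and using $\sum_i E_i^\dagger E_i = \one$ together with the same argument applied to $B^\dagger$ yields $E_i B = B E_i$ for every $i$, i.e. $[E_i, B]=0$.

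**Main obstacle.** The delicate point is the equality-case bookkeeping: Kadison--Schwarz gives the inequality cheaply, but extracting $[E_i,B]=0$ for \emph{each} $i$ (not merely some averaged statement) requires packaging the difference $\chanh N(B^\dagger B) - \chanh N(B)^\dagger\chanh N(B)$ as an honest sum of positive operators whose indexing matches the Kraus index $i$. The cleanest route is the dilation: let $V:\hil H \to \hil H \otimes \hil L$, $V = \sum_i E_i \otimes \ket i$, so $\chanh N(X) = V^\dagger (X \otimes \one) V$ and $\chan N(\rho) = \tr_{\hil L} V\rho V^\dagger$; then $\chanh N(B^\dagger B) - \chanh N(B)^\dagger \chanh N(B) = V^\dagger (B\otimes\one)^\dagger (\one \otimes \one - VV^\dagger)(B\otimes\one) V \ge 0$, since $\one - VV^\dagger$ is the projector onto the orthocomplement of $\operatorname{ran} V$. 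Equality forces $(\one - VV^\dagger)(B\otimes \one)V = 0$, i.e. $(B\otimes\one)V = VV^\dagger (B\otimes\one)V$; sandwiching with $\one\otimes\bra i$ on the left recovers $BE_i = \sum_j E_j (\sum_k E_k^\dagger B E_k)\ldots$ — and then, because $V$ is an isometry so $V^\dagger(B\otimes\one)V = \chanh N(B) = B$, this collapses to $(B \otimes \one) V = V B$, whence componentwise $B E_i = E_i B$ for all $i$, as desired. The one thing to be careful about is that this argument used only $B$ and $B^\dagger B$; since $\alg B$ is closed under $\dagger$ and multiplication, applying it to $B^\dagger$ as well (or just noting $(B\otimes\one)V = VB$ already gives both $[E_i,B]=0$) completes the proof, so no genuine additional difficulty arises there.
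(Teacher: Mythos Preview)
Your proof is correct but takes a genuinely different route from the paper's. The paper reduces to projectors: since a finite-dimensional $\dagger$-algebra is spanned by its projectors, it suffices to show $[E_i,P]=0$ for each projector $P\in\alg B$; from $\sum_i E_i^\dagger P E_i = P$ one multiplies by $P^\perp=\one-P$ on both sides to get $\sum_i P^\perp E_i^\dagger P E_i P^\perp = 0$, hence $PE_iP^\perp=0$ for every $i$, and by symmetry $P^\perp E_i P=0$, so $[E_i,P]=0$. Your approach instead runs the multiplicative-domain argument via the dilation $V=\sum_i E_i\otimes\ket i$: because $\alg B$ is a $\dagger$-algebra, $B^\dagger B$ is also fixed, so $0=\chanh N(B^\dagger B)-\chanh N(B)^\dagger\chanh N(B)=V^\dagger(B^\dagger\otimes\one)(\one-VV^\dagger)(B\otimes\one)V$, which forces $(B\otimes\one)V=VV^\dagger(B\otimes\one)V=V\chanh N(B)=VB$, and reading off the $\ket i$ component gives $BE_i=E_iB$. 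The paper's argument is shorter and entirely elementary; yours is the standard Choi--Kadison--Schwarz equality-case argument, more conceptual and not dependent on the algebra being spanned by projectors, so it extends verbatim to the infinite-dimensional C$^*$-algebra setting the paper alludes to in the introduction. (Your ``Third'' and ``Fourth'' steps are a bit tangled as written, but the dilation version you give at the end is clean and complete; you might simply replace the Gram-type discussion with that.)
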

\proof
Given that any finite-dimensional $\dagger$-algebra is spanned by its projectors, we only need to show this for projectors $P$ in $\alg B$. 
We have $\sum_i E_i^\dagger P E_i = P$.  Multiplying this equation on both side by $P^\perp = \one - P$ we get $\sum_i P^\perp E_i^\dagger P E_i P^\perp = 0$. Taking the expectation value with respect to any vector, we can deduce that for all $i$, $P E_i P^\perp = 0$. Similarly, since $P^\perp$ is in $\alg B$, we have $P^\perp E_i P = 0$. Together this implies $P E_i = E_i P$.
\qed

This implies that, requiring a channel to be local to $\alg A$ with maximal complement $\alg A'$ is equivalent to asking for its Kraus operators to lie in $\alg A = \alg A''$ (via the double-commutant theorem). 

Furthermore, we also have that for $\chan N$ local to $\alg A$ with complement $\alg B$ then
\begin{equation}
\label{extension_equ}
\chanh N(AB) = \chanh N(A) B
\end{equation}
for all $A \in \alg A$ and $B \in \alg B$.

This also allows us to characterize the local channels in a way that will be useful below. Let $\chan P_{\alg B'}$ be the projector on the commutant $\alg B'$. We can implement it as an integral over that Haar measure in the group of unitary operators within $\alg B$:
\begin{equation}
\chan P_{\alg B'}(X) = \int_{U \in \alg B} U X U^\dagger dU,
\end{equation}
or using a unitary $1$-design $U_i \in \alg B$, $i = 1,\dots,n$:
\begin{equation}
\chan P_{\alg B'}(X) = \frac 1 n \sum_i U_i X U_i^\dagger.
\end{equation}
Hence, we have the Stinespring dilation
\begin{equation}
\chan P_{\alg B'}(X) = V_{\alg B}^\dagger (\one \otimes X) V_{\alg B},
\end{equation}
with 
\begin{equation}
V_{\alg B} = \frac 1 {\sqrt n} \sum_i \ket i \otimes U_i.
\end{equation}
Suppose $\chan N(\rho) = \sum_i E_i \rho E_i^\dagger$, and hence has a dilation isometry $V_{\chan N} = \sum_i E_i \otimes \ket i$: $\chan N^\dagger(X) = V_{\chan N}^\dagger (X \otimes \one) V_{\chan N}$. If $\chan N^\dagger$ fixes $\alg B$, then since $[E_i,U_j] = 0$, we obtain
\begin{equation}
\label{commuting_of_dilations}
(V_{\alg B} \otimes \one) V_{\chan N} = (\one \otimes V_{\chan N}) V_{\alg B}, 
\end{equation}
because, when extended, both sides are proportional to
\begin{equation}
\sum_{ij} \ket i \otimes U_i E_j \otimes \ket j = \sum_{ij} \ket i \otimes E_j U_i  \otimes \ket j.
\end{equation}
Tracing-out the third tensor factor on both sides of the equation yields
\begin{equation}
\label{commute_with_partialexp}
\chan V_{\alg B} \circ \chan N = (\id \otimes \chan N)\circ \chan V_{\alg B},
\end{equation}
where we used $\chan V_{\alg B}(\rho) = V_{\alg B} \rho V_{\alg B}^\dagger $.
Similarly, tracing-out the first tensor factor yields
\begin{equation} 
\label{commute_comp}
(\cchan P_{\alg B'} \otimes \id) \circ \chan V_{\chan N} = (\id \otimes \cchan N)\circ \chan V_{\alg B}.
\end{equation}

\subsection{Local complementary channels}
\label{lcc}

In order to generalize \mytheorem~\ref{thmduality}, we need a notion of local complementary channel.
\begin{definition}
Let $\chan N$ be a channel local to $\alg A$ with complement $\alg B$. Given a Stinespring dilation $\chanh N(X) = V^\dagger (X\otimes \one_E) V$, where $V$ is an isometry: $V^\dagger V = \one$,
We define a corresponding {\em local complementary channel} $\lcchan{\alg B}{N}$ by
\begin{equation}
\lcchanh{\alg B}{N}(B \otimes E) = V^\dagger (\chan P_{\alg B}(B) \otimes E) V.
\end{equation}
\end{definition}

This definition will be justified by the fact that such local complementary channel appears naturally in \mytheorem~\ref{main2a}, where it plays the same role as the normal complementary channel in \mytheorem~\ref{thmduality}.

We can, nevertheless motivate it intuitively as follows. Consider a channel from Alice to Bob. 
The complementary channel represents all the information that can possibly be recovered by a third party---the ``environment''---simultaneously to Bob receiving his information. 
If, however, Bob cannot access the subsystem defined by the algebra $\alg B$ in the output of the channel (because he is a local observer), then that system should also be counted as part of the environment.

A local complementary channel can be expressed in terms of a standard complementary channel through
\begin{equation}
\label{compequ}
\lcchan{\alg B }N = \comp {\chan P_{\alg B'} \chan N},
\end{equation}
where $\chan P_{\alg B'}$ denotes the projector on the commutant $\alg B'$ of the algebra $\alg B$.
This follows from the fact that $\comph {\chan P_{\alg B'} \chan N}(X \otimes E) = V^\dagger (\chan P_{\alg B}(X) \otimes E) V$, where $V$ is an isometry such that $\chanh N(Y) = V^\dagger (Y \otimes \one) V$.

\subsection{Condition for local reversibility}

We now have the tools needed to generalize \mytheorem~\ref{thmduality} to local channels. Let us first consider only the constraint that a local channel must fix some algebra. This is also equivalent to considering only strong locality, since if $\chan N^\dagger$ fixes $\alg B$, then it must also map $\alg B'$ into $\alg B'$ (since its Kraus operators must then all belong to $\alg B'$).

\begin{theorem}
\label{main2a}
Let $\chan N$ and $\chan M$ be two channels such that both $\chan N^\dagger$ and $\chan M^\dagger$ fix the algebra $\alg B$. Then
\begin{equation}
\label{fixedduality}
\begin{split}
\max_{\text{$\chan R^\dagger$ fixes $\alg B$}} F(\chan R \chan N, \chan M) 
&= \max_{\text{$\chan S^\dagger$ fixes $\alg B$}} F(\lcchan{\alg B}{N}, \chan S \lcchan{\alg B}{M}).\\
\end{split}
\end{equation}
\end{theorem}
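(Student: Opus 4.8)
The plan is to reduce Theorem~\ref{main2a} to Theorem~\ref{thmduality} by substituting the channels $\chan P_{\alg B'}\chan N$ and $\chan P_{\alg B'}\chan M$ into the unconstrained duality and then showing that each side of the resulting identity equals the corresponding side of \eqref{fixedduality}. First I would apply Theorem~\ref{thmduality} with $\chan N$ replaced by $\chan P_{\alg B'}\chan N$ and $\chan M$ replaced by $\chan P_{\alg B'}\chan M$, using the identity \eqref{compequ}, $\lcchan{\alg B}{N}=\comp{\chan P_{\alg B'}\chan N}$, so that the right-hand side already has the form $\max_{\chan S}F(\lcchan{\alg B}{N},\chan S\,\lcchan{\alg B}{M})$ with $\chan S$ \emph{unconstrained}. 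The core of the argument is then twofold: (i) on the left-hand side, $\max_{\chan R}F(\chan R\,\chan P_{\alg B'}\chan N,\chan P_{\alg B'}\chan M)$ must be shown to equal $\max_{\text{$\chan R^\dagger$ fixes $\alg B$}}F(\chan R\,\chan N,\chan M)$; and (ii) on the right-hand side, restricting $\chan S$ to those with $\chan S^\dagger$ fixing $\alg B$ must be shown not to lower the maximum.

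For (i), the key observations are that since $\chan N^\dagger$ and $\chan M^\dagger$ fix $\alg B$, Lemma~\ref{fixed-point} gives that their Kraus operators lie in $\alg B'$, hence $\chan P_{\alg B'}\chan N=\chan N$ and $\chan P_{\alg B'}\chan M=\chan M$ as channels; so the left side is literally $\max_{\chan R}F(\chan R\chan N,\chan M)$ over \emph{unconstrained} $\chan R$. It therefore suffices to show that one may always replace an optimal $\chan R$ by $\chan P_{\alg B'}\circ\chan R$ without decreasing the fidelity. This is where I expect to use the monotonicity of $F$ under left composition together with the fact that, because $\chan M^\dagger$ fixes $\alg B$, also $\chan P_{\alg B'}\chan M=\chan M$; then
\begin{equation}
F(\chan P_{\alg B'}\chan R\chan N,\chan M)=F(\chan P_{\alg B'}\chan R\chan N,\chan P_{\alg B'}\chan M)\ge F(\chan R\chan N,\chan M),
\end{equation}
and $(\chan P_{\alg B'}\chan R)^\dagger=\chan R^\dagger\chan P_{\alg B'}$ maps into $\alg B'$, so by Lemma~\ref{fixed-point}\,-\,type reasoning (its Kraus operators lie in $\alg B'$) it fixes $\alg B$. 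This is exactly the mechanism used for Corollary~\ref{main1} and in the paragraph preceding it, so I would reuse that contractivity-plus-idempotence trick verbatim.

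For (ii), I would argue symmetrically on the right-hand side: given an unconstrained optimal $\chan S$, replace it by $\chan P_{\alg B'}\circ\chan S$. Monotonicity gives $F(\lcchan{\alg B}{N},\chan P_{\alg B'}\chan S\,\lcchan{\alg B}{M})\ge F(\comp{\chan N},$\dots$)$ provided $\chan P_{\alg B'}\lcchan{\alg B}{N}=\lcchan{\alg B}{N}$, i.e.\ the local complementary channel already lands in $\alg B'$; this is where I would need to look carefully at the structure of $\lcchan{\alg B}{N}=\comp{\chan P_{\alg B'}\chan N}$ and the commuting-dilation identities \eqref{commuting_of_dilations}--\eqref{commute_comp}, which were set up precisely to express $\cchan N$ and $\chan V_{\alg B}$ in a compatible way. \textbf{The main obstacle} I anticipate is precisely this step: verifying that $\lcchan{\alg B}{N}$ has output in $\alg B'$ (equivalently, that $\chan P_{\alg B'}$ acts trivially on it), and dually that the relevant $\chan S$ can be taken with $\chan S^\dagger$ fixing $\alg B$ without affecting $F$ — this requires unwinding the definition of the local complementary channel through its Stinespring isometry and using that the dilation $V_{\alg B}$ of $\chan P_{\alg B'}$ commutes with the dilation of $\chan N$ in the sense of \eqref{commuting_of_dilations}. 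Once both replacements are justified, the two restricted maxima in \eqref{fixedduality} sandwich the unconstrained maxima of Theorem~\ref{thmduality} from below while being trivially bounded above by them, forcing equality; the validity for both $F_\rho$ and $F_\V$ is inherited directly from Theorem~\ref{thmduality} since the only extra ingredient, left-monotonicity, holds for both.
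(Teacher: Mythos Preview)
Your step (i) contains a genuine error that breaks the argument. You claim that because $\chan N^\dagger$ fixes $\alg B$ (so its Kraus operators lie in $\alg B'$ by Lemma~\ref{fixed-point}), one has $\chan P_{\alg B'}\chan N=\chan N$. This is false: Kraus operators lying in $\alg B'$ only gives that $\chan N$ \emph{commutes} with $\chan P_{\alg B'}$, i.e.\ $\chan P_{\alg B'}\chan N=\chan N\chan P_{\alg B'}$, not that the projection acts trivially on the output. For a concrete counterexample take $\alg B=\ops(\hil H)$, so $\alg B'=\mathbb C\one$; then the only channel with Kraus operators in $\alg B'$ is $\chan N=\id$, yet $\chan P_{\alg B'}\id=\chan P_{\alg B'}\neq\id$. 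The same mistake recurs when you assert that $(\chan P_{\alg B'}\chan R)^\dagger$ fixes $\alg B$: in fact $\chan P_{\alg B'}$ itself does \emph{not} fix $\alg B$ (for $B\in\alg B$ one gets $\chan P_{\alg B'}(B)\in Z(\alg B)$, not $B$), so composing with it cannot produce a $\alg B$-fixing channel. Consequently the left-hand side of Theorem~\ref{thmduality} applied to $\chan P_{\alg B'}\chan N$, $\chan P_{\alg B'}\chan M$ is genuinely $\max_{\chan R}F(\chan R\chan P_{\alg B'}\chan N,\chan P_{\alg B'}\chan M)$, and your monotonicity-plus-idempotence trick from Corollary~\ref{main1} does not transfer here because its key hypothesis $\chan P\chan M=\chan M$ has no analogue.

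The paper avoids this by working not with the channel $\chan P_{\alg B'}$ (under which fidelity is only monotone) but with its \emph{isometric} Stinespring dilation $\chan V_{\alg B}$ (under which fidelity is exactly preserved). Post-composing by $\chan V_{\alg B}$ and invoking the commuting-dilation identity~\eqref{commute_with_partialexp}, which holds precisely when $\chan R^\dagger,\chan N^\dagger,\chan M^\dagger$ fix $\alg B$, converts $F(\chan R\chan N,\chan M)$ into $F((\id\otimes\chan R\chan N)\chan V_{\alg B},(\id\otimes\chan M)\chan V_{\alg B})$, which for $F_\rho$ is literally $F_{\chan P_{\alg B'}(\rho)}(\chan R\chan N,\chan M)$. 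One then applies Theorem~\ref{thmduality} at this modified reference state, and the dual identity~\eqref{commute_comp} pulls the isometry back out on the complementary side to produce $\lcchan{\alg B}{N}$ and $\lcchan{\alg B}{M}$. The isometry is exactly what buys you equalities instead of one-sided bounds; your substitution of the non-isometric $\chan P_{\alg B'}$ cannot replicate this.
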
 
\proof  
Let $\chan V_{\alg B}$ be an isometry dilating the projector on $\alg B'$ as in Section~\ref{local_chan}. Using the fact that the fidelity is invariant under post-processing by an isometry, and then Eq.~\eqref{commute_with_partialexp} ,
\begin{equation}
\begin{split}
F(\chan R \chan N, \chan M) &= F(\chan V_{\alg B} \chan R \chan N, \chan V_{\alg B}\chan M) \\
&= F((\id \otimes \chan R \chan N) \chan V_{\alg B} , (\id \otimes \chan M) \chan V_{\alg B}). \\
\end{split}
\end{equation}
If we use the entanglement fidelity $F_\rho$, then this last term is just
\begin{equation}
F_{\chan P_{\alg B'}(\rho)}(\chan R \chan N, \chan M). \\
\end{equation}
We can then apply \mytheorem~\ref{thmduality} to this quantity, 
to obtain
\begin{equation}
\max_{\chan R} F_{\chan P_{\alg B'}(\rho)}(\chan R \chan N, \chan M) 
= \max_{\chan S} F_{\chan P_{\alg B'}(\rho)}(\cchan N, \chan S \cchan M).
\end{equation}
This also works for the worst-case fidelity $F_\V$, but we need a cosmetically stronger version of \mytheorem~\ref{thmduality} that would hold when minimzing over states of the form $\chan P_{\alg B'}(\rho)$ rather than all states. But since this set is also convex, the proof in Ref.~\cite{beny2010} (or Ref.~\cite{beny2011} with more details) works unchanged. 
 
Using Eq.~\eqref{commute_comp}, we obtain, both for $F$ replaced by $F_\rho$ or $F_\V$,
\begin{equation}
\begin{split}
&F((\id \otimes \chan R \chan N) \chan V_{\alg B} , (\id \otimes \chan M) \chan V_{\alg B})\\
&\quad \quad = F((\id \otimes \cchan N) \chan V_{\alg B}, (\id \otimes \chan S \cchan M) \chan V_{\alg B})\\
&\quad \quad = F((\cchan P_{\alg B'} \otimes \id)  \chan V_{\chan N}, (\cchan P_{\alg B'} \otimes \chan S) \chan V_{\chan M})\\
&\quad \quad = F((\chan P_{\alg B} \otimes \id)  \chan V_{\chan N}, (\chan P_{\alg B} \otimes \chan S) \chan V_{\chan M})\\
&\quad \quad = F(\lcchan{\alg B}{N}, (\id \otimes \chan S) \lcchan{\alg B}{M}).\\
\end{split}
\end{equation}
In the second-to-last step, we used the fact that $\chan P_{\alg B}$ is complementary to $\chan P_{\alg B'}$, and hence is equivalent to $\cchan P_{\alg B'}$ up to a reversible post-processing, which cannot change the value of the fidelity. 
\qed

We can now combine this with the approach of Section~\ref{rev_const}, to obtain a dual condition for the local correctability of a channel. Where locality is defined with respect to an algebra $\alg A$ with commuting complement $\alg B$.
\begin{corollary}
\label{main2b}
Let $\chan P$ be the projector on $\alg A \vee \alg B$ (the algebra generated by $\alg A \cup \alg B$) and $\chan N$ a channel local to $\alg A$ with complement $\alg B$, then  
\begin{equation}
\label{localduality}
\begin{split}
\max_{\text{$\chan R$ local}} F(\chan R \chan N, \chan P) 
&= \max_{\text{$\chan S^\dagger$ fixes $\alg B$}} F(\lcomp{\alg B}{ \chan P \chan N}, \chan S \lcchan{\alg B} P),\\
\end{split}
\end{equation}
where the maximization on the left-hand side is over channels $\chan R$ which are local in the same sense as for $\chan N$.
\end{corollary}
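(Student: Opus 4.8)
The plan is to obtain Eq.~\eqref{localduality} by combining the two reductions already at our disposal: the passage to a local complementary channel from the proof of \mytheorem~\ref{main2a}, which handles the requirement that the recovery fix $\alg B$, and the physicality manipulation behind \mycorollary~\ref{main1} and Section~\ref{rev_const}, which handles the requirement that the recovery not act on observables outside $\alg A$. I would first note that ``$\chan R$ local to $\alg A$ with complement $\alg B$'' means exactly ``$\chanh R$ fixes $\alg B$ and $\chanh R(\alg A)\subseteq\alg A$'', and that $\chan P$ (the conditional expectation onto $\alg A\vee\alg B$) shares both properties: it fixes $\alg B\subseteq\alg A\vee\alg B$, and it is the identity on the subalgebra $\alg A$. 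By Eq.~\eqref{extension_equ} both $\chanh N$ and $\chanh P$ map $\alg A\vee\alg B$ into itself, so $\chan N$ and $\chan P$ are each physical with respect to $\alg A\vee\alg B$ in the sense of Definition~\ref{def:PhysicalChannel}; in particular $\chanh N$ and $(\chan P\chan N)^\dagger$ agree on $\alg A\vee\alg B$.

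I would then apply \mytheorem~\ref{main2a} with $\chan M=\chan P$ and with $\chan N$ replaced by $\chan P\chan N$. The hypothesis is met because $(\chan P\chan N)^\dagger=\chanh N\chan P$ fixes $\alg B$ (each of $\chan P$ and $\chan N$ does) and $\chanh P$ fixes $\alg B$; moreover, since $\lcomp{\alg B}{\chan K}=\comp{\chan P_{\alg B'}\chan K}$ by Eq.~\eqref{compequ}, a local complementary channel of $\chan P\chan N$ is precisely $\lcomp{\alg B}{\chan P\chan N}$. Thus \mytheorem~\ref{main2a} gives
\begin{equation}
\max_{\text{$\chan R^\dagger$ fixes $\alg B$}} F(\chan R\,\chan P\chan N,\chan P)=\max_{\text{$\chan S^\dagger$ fixes $\alg B$}} F(\lcomp{\alg B}{\chan P\chan N},\chan S\,\lcchan{\alg B}{P}),
\end{equation}
whose right-hand side is that of Eq.~\eqref{localduality}. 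It remains to show its left-hand side equals $\max_{\text{$\chan R$ local}} F(\chan R\chan N,\chan P)$.

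For that identification I would reprise the contractivity argument of Section~\ref{rev_const} inside the class of channels whose adjoint fixes $\alg B$. One inequality is immediate: for local $\chan R$, Eq.~\eqref{extension_equ} applied to $\chan R$ gives $\chanh R(\alg A\vee\alg B)\subseteq\alg A\vee\alg B$, hence $\chan P\chan R=\chan P\chan R\chan P$ and so $\chan P\chan R\chan N=(\chan P\chan R)(\chan P\chan N)$; by monotonicity of $F$ and $\chan P^2=\chan P$, $F((\chan P\chan R)(\chan P\chan N),\chan P)\ge F(\chan R\chan N,\chan P)$, while $\chan P\chan R$ has adjoint fixing $\alg B$, so $\max_{\text{$\chan R$ local}} F(\chan R\chan N,\chan P)$ is bounded above by the left-hand side of the displayed identity. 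For the reverse, given a maximizer $\chan R$ with $\chanh R$ fixing $\alg B$, the sandwiched channel $\chan P\chan R\chan P$ still fixes $\alg B$, satisfies $F((\chan P\chan R\chan P)\chan N,\chan P)\ge F(\chan R\,(\chan P\chan N),\chan P)$ by monotonicity, and maps $\alg A$ into $(\alg A\vee\alg B)\cap\alg B'$, because $\chanh R$ has Kraus operators in $\alg B'$ (\mylemma~\ref{fixed-point}) and $\chan P$ is an $\alg A\vee\alg B$-bimodule map. Both fidelities survive, $F_\V$ because the minimization in \mytheorem~\ref{thmduality} may be taken over the convex image $\{\chan P_{\alg B'}(\rho)\}$, exactly as in the proof of \mytheorem~\ref{main2a}.

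The step I expect to be the main obstacle is closing this chain of maxima: showing that relaxing the recovery's locality from $\chanh R(\alg A)\subseteq\alg A$ to $\chanh R(\alg A)\subseteq(\alg A\vee\alg B)\cap\alg B'$ cannot raise the optimal fidelity against $\chan P$. The two differ precisely when $\alg B$ has a nontrivial centre (the generic case, e.g.\ for fermions with $\alg B=\alg A_{\omega^c}$), in which case $(\alg A\vee\alg B)\cap\alg B'=\alg A\vee Z(\alg B)$ strictly contains $\alg A$; this is exactly where the conditional expectations onto $\alg A$ and onto $\alg B'$ fail to commute and where $\alg A\vee\alg B$ is not a tensor product, so the sandwiching above does not return a strictly local channel. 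The difficulty is absent in the strongly local case ($\alg B=\alg A'$, where $Z(\alg B)=Z(\alg A)\subseteq\alg A$); in general I would try to remove the residual central leakage by decomposing everything along the minimal central projectors of $\alg B$ --- which all three of $\chan N$, $\chan P$ and the recovery fix, hence respect --- so that within each block $\alg B$ is a factor and no leakage occurs, and then patching the blocks back together, using that conditioning on a blind measurement of $Z(\alg B)$ is itself a strictly local operation. Once this is established, the remaining steps are exactly those of the proofs of \mytheorem~\ref{main2a} and \mycorollary~\ref{main1}.
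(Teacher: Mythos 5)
Your proposal is, in its main line, the paper's own proof: apply \mytheorem~\ref{main2a} with $\chan M=\chan P$ and $\chan N$ replaced by $\chan P\chan N$, then identify the resulting maximum over $\{\chan R:\chan R^\dagger\text{ fixes }\alg B\}$ with the maximum over local recoveries by sandwiching with $\chan P$; your easy direction (local $\chan R\mapsto\chan P\chan R$, using $\chan P\chan R\chan P=\chan P\chan R$ for local $\chan R$ together with monotonicity and $\chan P^2=\chan P$) is verbatim the paper's second observation.

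The ``main obstacle'' you flag is precisely the step the paper dispatches in a single clause, by simply asserting that $\chan P\chan R\chan P$ is local. As you correctly note, all that follows from $\chan R^\dagger$ fixing $\alg B$ (hence having Kraus operators in $\alg B'$, by \mylemma~\ref{fixed-point}) and the bimodule property of $\chan P$ is $(\chan P\chan R\chan P)^\dagger(\alg A)\subseteq\alg B^c=\bigoplus_i\alg A_i\otimes\mc Z(\alg B_i)$, which coincides with $\alg A$ only when each $\alg B_i$ is a factor --- in particular in the strongly local case $\alg B=\alg A'$, where $\alg B^c=\alg A''\cap(\alg A\vee\alg A')=\alg A$, which covers the paper's applications. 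So you have not missed an ingredient that the paper supplies; you have located the one point where its argument is loose. Be warned, however, that your sketched repair does not close the gap: the sandwiched map is already block-diagonal with respect to the minimal central projectors of $\alg B$, so post-composing with the blind measurement of $\mc Z(\alg B)$ acts as the identity on the offending image $\bigoplus_i\alg A_i\otimes\mc Z(\alg B_i)$ and removes no leakage; worse, no channel whose adjoint fixes $\alg B$ can collapse $\mc Z(\alg B)\subseteq\alg B$ onto $\alg I$, since it must fix $\mc Z(\alg B)$ pointwise. In the general case one should therefore either assume each $\alg B_i$ is a factor, or read ``local'' on the left-hand side of Eq.~\eqref{localduality} as $\chan R^\dagger(\alg A)\subseteq\alg B^c$ (Definition~\ref{local_channel} with $\alg A$ enlarged to $\alg B^c$), which is what the sandwich actually delivers; with either reading your argument, like the paper's, is complete.
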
 
\proof  
From \mytheorem~\ref{main2a}, the right hand side of Eq.~\eqref{localduality} is equal to
\begin{equation}
\max_{\text{$\chan R^\dagger$ fixes $\alg B$}} F(\chan R \chan P \chan N, \chan P).
\end{equation}
But the set of channels $\chan R \chan P$ where $\chan R$ reaches the maximum in this expression must include some which are local, since by monotonicity of the fidelity, the local map $\chan P \chan R \chan P$ can only perform better than $\chan R \chan P$. Also, if $\chan R$ is a local channel which maximizes $F(\chan R \chan N, \chan P)$, then so does $\chan P \chan R \chan P$ since then 
\begin{equation}
F(\chan P \chan R \chan P \chan N, \chan P) = F(\chan P \chan R \chan N, \chan P) \ge F(\chan R \chan N, \chan P).
\end{equation}
\qed

We observe that 
\(
\lcomp{\alg B}{ \chan P \chan N} = \comp{ \chan P_{\alg B^c} \chan N}
\)
and we can use
\(
\lcchan{\alg B} P = \chan P_{(\alg B^c)'},
\)
where $\alg B^c = \alg B' \cap (\alg A \vee \alg B)$ is the commutant of $\alg B$ {\em relative to} $\alg A \vee \alg B$.

We can also be more explicit about the structure of these algebras.
Since $\alg A$ and $\alg B$ commute with each other, the intersection 
\begin{equation}
\alg I := \alg A \cap \alg B
\end{equation}
is a commutative $\dagger$-algebra. Hence it is spanned by a complete family of projectors $P_i$, $i=1,\dots,n$, such that $P_i P_j = \delta_{ij} P_i$ and $\sum_i P_i = \one$.

The algebras $\alg A$ and $\alg B$ must be block-diagonal in terms of the sectors $\hil H_i = P_i \hil H$, $i=1,\dots,N$, and, since they commute, they must take the form
\begin{equation}
\alg A = \bigoplus_{i=1}^N \alg A_i \otimes \one_{m_i} \;\;\text{and}\;\; \alg B = \bigoplus_{i=1}^N \one_{n_i} \otimes \alg B_i
\end{equation}
where the $i$th term of the direct sum is supported on the sector $\hil H_i$, and $\alg A_i$ and  $\alg B_i$ are themselves $\dagger$-algebras.

Then the relative commutant of $\alg B$ in $\alg A \vee \alg B$ has the form 
\begin{equation}
\alg B^c = \bigoplus_i \alg A_i \otimes \mc Z(\alg B_i),
\end{equation}
where $\mc Z(\alg B_i) = \alg B_i \cap \alg B_i'$ is a commutative algebra: the center of $\alg B_i$.

\subsection{Example: standard tensor product}
\label{locex1}

To see how this works, let us first consider the meaning of Eq.~\eqref{localduality} being equal to unity (maximal) for channels which are local in the usual sense of a tensor product of Hilbert space, say system $\hil H_A \otimes \hil H_B$. The algebras are $\alg A = \ops(\hil H_A) \otimes \one$ and $\alg B = \one \otimes \ops(\hil H_B)$. 
The channel $\chan N$ being local to system $A$ implies $\chan N = \chan N_A \otimes \id$. 

The problem considered here is whether the specific noise channel $\chan N$ can be corrected by a channel acting within $A$. This is not to be confused with the more standard  problem of recovering arbitrary noise channels on $A$ without restricting the locality of the recovery operation.


Let us define $\chan D_{\sigma}$ to be a fully depolarizing channel, with constant output state $\sigma$: $\chan D_\sigma(\rho) = \sigma\, \tr \rho$ for all $\rho$.

We observe that $\alg B^c = \alg A = \chan B(\hil H_A) \otimes \one$ and $(\alg B^c)' = \alg B = \one \otimes \chan B(\hil H_B)$. Hence 
\begin{equation}
\chan P_{\alg B^c} = \id \otimes \chan D_{\one / d_B} \quad \text{and} \quad \lcchan {\alg B} P = \chan P_{(\alg B^c)'} = \chan D_{\one / d_A} \otimes \id, 
\end{equation}
where $d_A$ and $d_B$ are the dimensions of $\hil H_A$ and $\hil H_B$ respectively.
It follows that 
\begin{equation}
\lcomp{\alg B}{ \chan P \chan N} = \comp{\chan P_{\alg B^c} \chan N} = \comp{ \chan N_A \otimes \chan D_{\one / d_B}} = \cchan N_A \otimes \id.
\end{equation}

Moreover, for any channel $\chan S$, $\chan S \circ \chan D_{\one / d_A} = \chan D_\sigma$, where $\sigma = \chan S(\one / d_A)$.
Hence, \mycorollary~\ref{main2b} tells us that 
\begin{equation}
\max_{\text{$\chan R$}} F(\chan R \chan N_A \otimes \id_B, \id_{AB}) 
= \max_{\sigma} F(\cchan N_A \otimes \id_B,  \chan D_\sigma \otimes \id_B).
\end{equation}

This is almost exactly like the QEC conditions without the locality constraint: the channel is reversible if and only if the local environment gets no information. 
The difference is that the code is defined as a subspace of the joint system $AB$ instead of just system $A$.

To understand this in more detail, let us expand the conditions resulting for the fidelity $F = F_\V$ being maximal on both sides. 
 If we write this condition in terms of the Kraus operators $E_i$ of $\chan N_A$, we obtain that the condition is that there exists a state $\sigma$ such that for all operators $E \otimes B$,
\begin{equation}
\sum_{ij} \V^\dagger (E_i^\dagger E_j \otimes B) \V \bra i E \ket j  = \V^\dagger (\one \otimes B) \V \, \tr(\sigma E).
\end{equation}
Equivalently, for all $i$, $j$, and all $B$, there must be numbers $\lambda_{ij}$ such that 
\begin{equation}
\label{exqubitlocalcond}
\V^\dagger (E_i^\dagger E_j \otimes B) \V = \lambda_{ij} \V^\dagger (\one \otimes B) \V.
\end{equation}
This can also be formulated in terms of matrix elements $B = \ketbra mn$: for all $i,j,m,n$, there must exist $\lambda_{ij} \in \mathbb C$ such that
\begin{equation}
\V_m^\dagger E_i^\dagger E_j \V_n = \lambda_{ij} \V_m^\dagger \V_n,
\end{equation}
where $\V_n := (\one \otimes \bra n) \V$.

\subsection{Example: fermions and strong locality}\label{sec:fermion_strong_loc}

Let us unpack \mycorollary~\ref{main2b} for fermions, and with the strong locality requirement $\alg B = \alg A'$. 

Firstly, if $\alg B = \alg A'$, then $\alg B^c = \alg A$, and $\lcomp{\alg B}{ \chan P \chan N} = \comp{\chan P_{\alg A} \chan N}$. Similarly, $\lcchan {\alg B} P = \chan P_{\alg A'}$. 

Let $\alg A = \alg A_\omega$ be a local algebra for fermions for a region $\omega$. 
A charge generating the center $\alg I = \alg A \cap \alg A'$ is the parity $C_\omega$ of the number of fermions in the region $\omega$. 
\new{Note that the same statement holds if $\omega$ and $\omega^c$ are subsets indexing  Majorana modes.} 
The commutant is just $\alg A' = \alg I \vee \alg A_{\omega^c} = {\rm span}\{C, \alg A_{\omega^c}\}$, where $C$ is the global parity observable.

\mycorollary~\ref{main2b} tells us that the optimal fidelity for local reversal of $\chan N$ is equal to 
\(
\max_{\chan S} F(\comp{\chan P_{\alg A} \chan N}, \chan S \chan P_{\alg A'})
\)
where $\chan S^\dagger$ must fix $\alg A'$. 

But the channel $\chan S$ needs only be defined on $\alg A'$ since it acts after a projection on it. Since its adjoint must fix $\alg A'$, it can be assumed to be of the form
\begin{equation}
\chan S^\dagger(B \otimes E) = \sum_{k\in \{+,-\}} \tr(\rho_k E) P_k B P_k,
\end{equation}
for any $B \in \alg A'$ and $E$ acting on the environment from the dilation of $\chan N$. The only freedom are the two arbitrary fixed states $\rho_k$, $k \in \{+, -\}$. Here, $P_{+}$ and $P_{-}$ are the projectors on even and odd parity for the region $\omega$. .

If $E_i$ are Kraus operators for $\chan N$, the code $\V$ is then exactly locally correctable if and only if there exist two states $\rho_{\pm}$ such that for all $B \in \alg A'$, and all $E$,
\begin{equation}
\begin{split}
\sum_{ij} \bra i  E \ket j\,\V^\dagger E_i^\dagger B  E_j \V &= \V^\dagger \chan S^\dagger(B \otimes E) \V\\
&= \sum_{k\in \{+,-\}} \tr(\rho_k E) \V^\dagger P_k B P_k \V,\\
\end{split}
\end{equation}
or, equivalently, for all $i$, $j$, any parity $k$, and for all $B \in \alg A'$, there exist $\lambda_{ijk} \in \mathbb C$ such that
\begin{equation}\label{eq:FermionRecovery}
\V^\dagger E_i^\dagger E_j B_k \V  = \lambda_{ijk} \V^\dagger B_k \V,
\end{equation}
where $B_k = P_k B P_k$ commute with the operators $E_i$.

The only difference with Eq.~\eqref{exqubitlocalcond} is that $B$ is restricted to the algebra $\alg A'$ which is a direct sum of two factors, corresponding to the two values for the parity of the region $\omega$.

For instance, if the code defined by $\V$ is restricted to states of a fixed parity, then the conditions are of the same form as in the Section~\ref{locex1}.

We see that these conditions do not reduce to those of Section~\eqref{ex_rev_sup} when $\omega$ is the whole system. This is because the strong locality condition is non-trivial in this limit, as the channel is still required to fix the global parity operator, which commutes with $\alg A_\Omega$. This is what makes the above conditions simpler.

\subsection{Example: Majorana ring}\label{sec:MajoranaRingStrongLocality}

In this section we extend the treatment of the Majorana ring, introduced in Section~\ref{sec:MajoranaRing}. In addition to physicality, we require both the noise and the recovery map ($\chan N$  and $\chan R$) to be strongly local to the algebra $\alg A_L$, with $L$ corresponding to some subset of Majorana modes on the ring.
As defined in \ref{local_channel}, this means that $\chan N^\dagger(B) = B$ for all operators $B \in \alg A'_L$.

According to Lemma~\ref{fixed-point}, and the double commutant theorem, the noise map $\chan N$ will have all Kraus operators $E_j \in \alg A_L$ (i.e. even weight polynomials of Majorana operators with indices in $L$). 
This amounts to the physical assumption that the system $L$ does not exchange fermions with the rest of the system or the environment.  

For simplicity, let us consider the case where $L$ includes the whole chain. In that case, the strong locality condition simply precludes quasi-particle poisoning. 
Since, $B_k \propto P_k$ in Eq.~\eqref{eq:FermionRecovery}, if we also assume that the code projector $WW^\dagger$ commutes with parity, then one can simply remove the terms $B_k$ from the equation. The necessary and sufficient conditions we obtain are (formally) equal to the standard Knill-Laflamme conditions.

In this context, a sufficient condition for the noise to satisfy Eq.~\eqref{eq:FermionRecovery} is that the noise operators $E_j$ do not involve ``distant'' Majorana modes (geometric locality).
Specifically, we assume that the Majorana modes appearing in the monomial expansion of $E_j$ are supported on an index interval of length at most $d/2$, where $d$ is the number of mode in the shortest region $I_j$.

(To be clear, this geometric locality of the noise operators does not imply locality of the noise channel as we have defined it in this document, because different noise operators of the channel may be local to different regions).

In order to show that the KL conditions are 
 satisfied for these error operators, we may expand $E_j$ in terms of even Majorana monomials (these span the algebra). 
The geometric locality of these operators then guarantees that they act non-trivially on at most one of the unpaired Majorana modes. 
But because the degree of the monomial is even, there must be an unmatched Majorana operator acting on a paired mode, hence mapping it to an excited state. 
Since $E_i^\dagger$ and $E_j$ in the expression $E_i^\dagger E_j$ cannot overlap if they each act on distinct unpaired Majorana modes in $\omega$, $E_i^\dagger E_j$ yield mutually orthogonal states. 
If both act on the same mode in $\omega$, canceling on it, then they either commute with the projector and are proportional to identity or are fully off-diagonal with respect to it.
Either of these cases satisfies Eq. ~\eqref{eq:FermionRecovery}.

\section{Conclusions}
We have extended the result of Refs. \cite{beny2010, beny2011} on approximate channel recovery to settings where the channels are restricted by symmetry and/or locality constraints.
The results obtained preserve the elegant duality structure of the information disturbance trade-off.
Namely, they take the form of a dual optimization representing the information which is not accessible to the environment. 
Although we do not have a general solution for the dual optimization, it is often simpler, and has a solution in important specific situations. 
As the duality itself has already proven useful conceptually~\cite{kretschmann2008,Almheiri2014,pastawski2016,flammia2016,Harlow2017},
we expect that our results will be widely applicable to the QEC aspects of symmetry protected topological phases as well as in some more realistic realizations of holography.

\section{Acknowledgements}
CB is thankful to Reinhard Werner and Tobias Osborne for discussions about channel locality. FP and ZZ would like to acknowledge Henrik Wilming and Szil\'ard Szalay for useful and interesting discussions. CB was supported by the National Research Foundation of Korea (NRF-2018R1D1A1A02048436). FP acknowledges the Alexander von Humboldt foundation. ZZ was supported by Hungarian National Research, Development and Innovation Office NKFIH (Contracts No. K124351,
K124152, and K124176) and the J\'anos Bolyai Scholarship of the Hungarian Academy of Sciences.

\bibliographystyle{unsrt}

\bibliography{complete.bib} 

\end{document}